\documentclass[12pt,english]{revtex4-2}
\usepackage{lmodern}

\usepackage[T1]{fontenc}
\usepackage[latin9]{inputenc}
\usepackage{babel}
\usepackage{dsfont}
\usepackage{amsmath}
\usepackage{amsthm}
\usepackage{amssymb}
\usepackage{graphicx}
\usepackage{geometry}
\usepackage{wasysym}
\usepackage[pdfusetitle,
 bookmarks=true,bookmarksnumbered=true,bookmarksopen=false,
 breaklinks=false,pdfborder={0 0 0},pdfborderstyle={},backref=false,colorlinks=false]
 {hyperref}

\makeatletter

\providecommand{\tabularnewline}{\\}

\theoremstyle{plain}
\newtheorem{prop}{\protect\propositionname}
\theoremstyle{plain}
\newtheorem{thm}{\protect\theoremname}
\theoremstyle{plain}
\newtheorem{cor}{\protect\corollaryname}
\theoremstyle{plain}
\newtheorem{lem}{\protect\lemmaname}

\makeatother

\providecommand{\corollaryname}{Corollary}
\providecommand{\lemmaname}{Lemma}
\providecommand{\propositionname}{Proposition}
\providecommand{\theoremname}{Theorem}

\begin{document}
\title{Series solutions to the TOV equations}
\author{Paulo Luz}
\email{paulo.luz@tecnico.ulisboa.pt}

\affiliation{Centro de Astrof\'isica e Gravita\c{c}\~ao - CENTRA, Departamento
de F\'isica, Instituto Superior T\'ecnico - IST, Universidade de
Lisboa - UL, Av. Rovisco Pais 1, 1049-001 Lisboa, Portugal}
\affiliation{Escola Superior N\'autica Infante D. Henrique, Pa\c{c}o de Arcos,
Portugal.}
\begin{abstract}
We present general series solutions to the Tolman--Oppenheimer--Volkoff
equations for compact stellar objects. We develop an algorithm to
compute the coefficients of the power series in terms of the equation
of state and its derivatives with respect to the thermodynamic variables.
Using these results, we establish general properties of analytic solutions
and their relation to the regularity of the equation of state. Applying
the theory of Pad\'e approximants, we derive series representations
for meromorphic functions whose domains of convergence may include
isolated poles. These analytic solutions are then used to obtain closed-form
expressions to approximate the radius and mass of stellar objects.
We apply the formalism to specific models, namely fluids with affine
equations of state and polytropic fluids, and compare the results
with those obtained from numerical integration. Lastly, we extend
the formalism to piecewise equations of state, deriving series solutions
that can be matched across transition hypersurfaces.
\end{abstract}
\maketitle

\section{Introduction}

Compact stellar objects are among the richest environments for probing
fundamental physics in regimes that are far beyond those presently
accessible to man-made laboratories.  Active and upcoming electromagnetic
and gravitational-wave detectors are expected to provide accurate
and statistically significant data on the macroscopic properties and
the evolution of massive compact objects. Relating the macroscopic
features of stellar objects to fundamental properties of the matter
fields that compose them will enable tighter constrains on theoretical
models and deepen our understanding of matter under extreme conditions~\citep{Huxford_et_al_2024,Lattimer_Prakash_2001,Webb_Barret_2007,Ozel_Psaltis_2009,Lindblom_2010,Oertel_Review_2017,Baym_et_al_report,Kojo_Baym_Hatsuda_2022}.

To a first approximation, compact stellar objects can be characterized
by static, spherically symmetric solutions of the theory of general
relativity. Under these symmetries and for a perfect fluid source,
the Einstein field equations can be cast in a simplified form: the
Tolman--Oppenheimer--Volkoff (TOV) equations. In the presence of
matter, the system is closed by specifying the fluid source, typically
in the form of an equation of state (EoS). Depending on the properties
of the matter fields, different solutions exist with markedly distinct
behavior.

Deriving an EoS from nuclear physics or quantum chromodynamics that
accurately describes matter under the conditions present in compact
objects interiors remains an extremely challenging task. Although
significant progress has been made in recent decades, a key difficulty
lies in the poorly constrained properties of matter at extreme densities,
particularly those expected to be found in neutron stars cores. This
uncertainty has motivated the development of numerous models based
on different assumptions about the state of matter~\citep{Akmal_Pandharipande_Ravenhall_1998,Morales_Pandharipande_Ravenhall_2002,Mukherjee_2009,Kruger_et_al_2013,Otto_Oertel_Schaefer_2020,Raduta_Oertel_Sedrakian_2020}.
Alternatively, extensive research has gone into constructing phenomenological
models that fit the EoS to available observational data (see, e.g.,
Refs.~\citep{Read_et_al_2009,Boyle_2020,Suleiman_et_al_2022} and
references therein). Nonetheless, to constrain the EoS, the theoretical
predictions of the TOV equations must be compared against experimental
data. In that regard, obtaining reliable solutions to the equations
is essential. 

Given the complexity of the TOV equations, very few exact solutions
suitable for stellar objects are known (see, e.g., Refs.~\citep{Delgaty_Lake_1998,MacCallum_Book_2009}).
For realistic EoS, numerical methods are often required to find approximate
solutions. However, this makes it difficult to extract general relationships
between stellar properties and the parameters of the EoS. In contrast,
analytic solutions can provide valuable qualitative insight into these
functional dependencies, which is crucial for connecting theoretical
predictions with observations. Moreover, the TOV equations form a
stiff system of differential equations with a singular point. Depending
on the EoS, their numerical integration is nontrivial, often requiring
working with very high precision or considering reformulations of
the original equations to control numerical errors~\citep{Lindblom_1998}.

It is then not surprising that significant attention has been directed
to find exact solutions of the TOV equations. In Refs.~\citep{Rahman_Visser_2002,Lake_2003,Martin_Visser_2004,Boonserm_Visser_Weinfurtner_2005,Boonserm_Visser_Weinfurtner_2007},
generating theorems were proposed to find solutions by deforming a
starting seed exact solution or by imposing a particular form for
one of the metric coefficients. In Ref.~\citep{Carloni_Vernieri_2018},
the TOV were cast in an explicit covariant form, allowing for the
derivation of new exact solutions. In Ref.~\citep{Harko_Mak_2016},
power series solutions to the TOV equations were developed for linear
and polytropic EoS. In Ref.~\citep{Saad_et_al_2017}, an analytic
solution was derived for polytropic fluids, where Pad\'e approximation
theory was used to extended the radius of convergence.

In this article, we derive series solutions to the TOV equations in
two general cases: when the energy density is a known analytic function,
and when a sufficiently regular barotropic equation of state is specified.
The article is organized as follows. In Section~\ref{Section:Reformulation_TOV},
we derive general series solutions to the TOV equations for a given
energy density profile. In Section~\ref{Section:General_Solution_EOS},
we consider the TOV equations coupled to a barotropic equation of
state, develop an algorithm to compute the power series coefficients
of the solution and establish general properties. The analytic results
are applied to particular types of fluids and compared with the numerical
approximations, in Section~\ref{Section:Applications_and_examples}.
In Section~\ref{Section:Piecewise}, we extend the formalism to piecewise
equations of state. In Section~\ref{Section:Conclusions}, we conclude.
In Appendix~\ref{Section:Appendix_Proof-of-Theorem}, we detail the
proof of the parity of the power series expansions of the energy density
and pressure about the center the star, for sufficiently regular equations
of state.

The mathematical formulas in the article are written in the geometrized
unit system, where $8\pi G=c=1$, and assuming the metric signature
$\left(-+++\right)$.

\section{Reformulation of the TOV equations\label{Section:Reformulation_TOV}}

\subsection{Equivalent differential system}

The TOV equations follow from the Einstein field equations considering
a static, spherically symmetric spacetime permeated by a perfect fluid
with energy density $\mu$ and pressure $p$. The original version
of the equations form a system of first order, non-linear differential
equations with a singular point. In Schwarzschild coordinates: $\left(t,r,\theta,\varphi\right)$,
where $r$ represents the circumferential radius, the TOV equations
read
\begin{align}
\frac{dp}{dr} & =-\frac{r^{2}\left(\mu+p\right)}{2\left(r-2M\right)}\left(p+\frac{2M}{r^{3}}-\Lambda\right)\,,\label{Reformulation_eq:Original_TOV_p}\\
\frac{dM}{dr} & =\frac{1}{2}\left(\mu+\Lambda\right)r^{2}\,,\label{Reformulation_eq:Original_TOV_M}
\end{align}
where $\Lambda$ represents the cosmological constant and
\begin{equation}
M\left(r\right)=\frac{1}{2}\int^{r}_{0}\left(\mu+\Lambda\right)x^{2}dx\,,\label{Reformulation_eq:Mass_function_definition}
\end{equation}
is dubbed the mass function. From hereon, we will omit the mass function
dependency on the radial coordinate. The system must be completed
by providing an equation that relates the energy density with the
pressure: an equation of state for the fluid, or an expression that
specifies one of the matter variables as a function of the spacetime
coordinates.

Solutions of TOV equations aim to model the interior of compact stellar
objects. As such, the contributions from the cosmological constant,
$\Lambda$, are often disregarded. We will consider this simplification
and throughout the article set $\Lambda=0$. Nonetheless, the inclusion
of the cosmological constant is straightforward and does not affect
the overall discussion below. Moreover, we will be interested in regular
solutions of the TOV equations, therefore, in what follows, we will
consider that $r>2M$ and call such solutions \emph{stars}.

Deriving exact solutions to the TOV equations for an equation of state
is a challenging task. However, various alternative strategies can
be employed to obtain solutions. One simplifying approach is to prescribe
the energy density directly as a function of spacetime. Following
this approach, various closed-form solutions have been found. To the
best of our knowledge, all known exact solutions share the property
that the energy density, $\mu\left(r\right)$, is real analytic at
the center of the star, at $r=0$. In this section, we then focus
on deriving general series solutions to the TOV equations given a
real analytic energy density. In the next section, we build on these
results to formulate an algorithm for computing power series solutions
to the TOV equations coupled with an equation of state.

We start by reformulating the system of equations to an equivalent
system where it is simpler to find the general power series solution
of the TOV equations for a know energy density function. Let the potentials
$\left(\phi,\mathcal{A},\mathcal{E}\right)$, such that
\begin{align}
\phi & =\frac{2}{r}\sqrt{1-\frac{2M}{r}}\,,\label{Reformulation_eq:phi_M_relation}\\
\mathcal{A}\phi & =p+\frac{2M}{r^{3}}\,,\label{Reformulation_eq:A_phi_p_relation}\\
\mathcal{E} & =\frac{1}{3}\mu-\frac{2M}{r^{3}}\,.\label{Reformulation_eq:E_mu_relation}
\end{align}
These potentials follow from the 1+1+2 semi-tetrad decomposition of
the spacetime manifold~\citep{Clarkson_Barrett_2003,Betschart_Clarkson_2004,Clarkson_2007},
having, thus, intrinsic physical meaning. In the considered setup,
$\phi$ represents the spatial expansion coefficient of the normalized
radial gradient, $\mathcal{A}$ represents the radial component of
the 4-acceleration of the elements of volume of the fluid, and $\mathcal{E}$
is the fully radial component of the electric part of the Weyl tensor,
characterizing tidal forces. We remark that in general $\phi$ diverges
at $r=0$, but $r\phi$ is regular. Moreover, regular solutions for
$p$ exist only if $\mathcal{A}$ and $\mathcal{E}$ vanish at the
center of the star.

Given a line element of the form
\begin{equation}
ds^{2}=-g_{tt}dt^{2}+g_{rr}dr^{2}+r^{2}d\Omega^{2}\,,\label{Reformulation_eq:General_line_element}
\end{equation}
where $d\Omega^{2}$ represents the line element of the unit 2-sphere,
and $g_{tt}$ and $g_{rr}$ are functions of the radial coordinate
only, we have the following relations
\begin{equation}
\begin{aligned}\phi & =\frac{2}{r\sqrt{g_{rr}}}\,,\\
\mathcal{A} & =\frac{1}{2g_{tt}\sqrt{g_{rr}}}\frac{dg_{tt}}{dr}\,.
\end{aligned}
\label{Reformulation_eq:phi_A_metric_relations}
\end{equation}

Using the TOV equations~(\ref{Reformulation_eq:Original_TOV_p})
and (\ref{Reformulation_eq:Original_TOV_M}), or starting directly
from the 1+1+2 structure equations, we find
\begin{equation}
\frac{d\mathcal{A}}{dr}=\frac{3\mathcal{E}}{r\phi}+\frac{1}{r}\mathcal{A}-\frac{2}{r\phi}\mathcal{A}^{2}\,.\label{Reformulation_eq:Riccati_A}
\end{equation}
Equation~(\ref{Reformulation_eq:Riccati_A}) is a Riccati equation
with a regular singular point at $r=0$. Applying a change of variables,
this equation can be written as a second-order linear ordinary differential
equation. Let $u$ such that
\begin{equation}
\frac{2\mathcal{A}}{r\phi}=\frac{1}{u}\frac{du}{dr}\,.\label{Reformulation_eq:A_u_variables_relation}
\end{equation}
From Eq.~(\ref{Reformulation_eq:phi_A_metric_relations}), the variable
$u\propto\sqrt{g_{tt}}$. Using Eq.~(\ref{Reformulation_eq:A_u_variables_relation})
in (\ref{Reformulation_eq:Riccati_A}), we find that the new variable
verifies the differential equation
\begin{equation}
\frac{d^{2}u}{dr^{2}}=\frac{6\mathcal{E}}{r^{2}\phi^{2}}u+\left(\frac{1}{r}+\frac{4\mu+6\mathcal{E}}{3r\phi^{2}}\right)\frac{du}{dr}\,.\label{Reformulation_eq:u_2nd_ODE}
\end{equation}

To find general power series solutions, we will apply an algorithm
described in Refs.~\citep{Coddington_Levinson_Book_1955,Coddington_Carlson_Book_1997},
also considered in Refs.~\citep{Luz_Carloni_2024a,Luz_Carloni_2024b,Luz_Carloni_2024c,Luz_Carloni_2026}
to find power series solutions for perturbations of stars. For this
purpose, it is useful to cast the second order differential equation~(\ref{Reformulation_eq:u_2nd_ODE})
in the form of a system of first-order differential equations. Introducing
$s=du/dr$, in matrix form we have
\begin{equation}
\frac{d\mathds{W}}{dr}=\left(\frac{1}{r}\left[\begin{array}{cc}
1 & 0\\
0 & 0
\end{array}\right]+\Theta\right)\mathds{W}\,,\label{Reformulation_eq:System_A_equation}
\end{equation}
where $\mathds{W}=\left[\begin{array}{cc}
s & u\end{array}\right]^{T}$ and
\begin{equation}
\Theta=\left[\begin{array}{cc}
{\displaystyle \frac{4\mu+6\mathcal{E}}{3r\phi^{2}}} & {\displaystyle \frac{6\mathcal{E}}{r^{2}\phi^{2}}}\\
1 & 0
\end{array}\right]\,.\label{Reformulation_eq:Theta_matrix}
\end{equation}
For $\mu$ analytic at $r=0$, the matrix $\Theta$ is analytic at
$r=0$.

Before proceeding, we note that a system for $p$, formally analogous
to (\ref{Reformulation_eq:System_A_equation}), can be obtained straightforwardly
by applying the same procedure directly to Eq.~(\ref{Reformulation_eq:Original_TOV_p}).
Since the pressure is an important quantity in determining the properties
of compact stellar objects, it is natural to consider a differential
system that explicitly relates $p$ to the energy density $\mu$.
However, as will be shown in the following section, to derive series
solutions to the TOV equations coupled with an equation of state,
it is advantageous to use the variables $\mu$, $p$ and $\mathcal{A}$.
Consequently, the specific choice of variable, between $p$ or $\mathcal{A}$,
in the differential system is not essential. Moreover, solutions of
Eq.~(\ref{Reformulation_eq:System_A_equation}) immediately yield
the $g_{tt}$ component of the metric tensor, up to a multiplicative
constant.

\subsection{Power series solutions given an analytic energy density}

The first-order linear differential system~(\ref{Reformulation_eq:System_A_equation})
contains a singular point at $r=0$. To derive the general power series
solutions around that point, we will employ the Coddington-Levinson
procedure. This approach allow us to find the power series solutions
around the singular point, which may or may not be regular at the
singular point. Nevertheless, we will show that in the considered
setup all analytic solutions are regular at $r=0$.

The singular part of Eq.~(\ref{Reformulation_eq:System_A_equation})
is characterized by a constant diagonal matrix with constant eigenvalues:
0 and 1. This property allow us to derive the general formal solution.
Using the Coddington-Levinson algorithm, the analytic solutions of
the TOV equation are given by
\begin{align}
\left[\begin{array}{r}
s\\
u
\end{array}\right] & =\left[\begin{array}{lr}
r & 0\\
0 & 1
\end{array}\right]\mathds{P}\left(r\right)\left[\begin{aligned}c_{1}\\
c_{2}
\end{aligned}
\right]\,,\label{Reformulation_eq:Solution_s_u}\\
\mathcal{A} & =\frac{r\phi s}{2u}\,,\label{Reformulation_eq:Solution_s_u_A}\\
p & =\mathcal{E}+\mathcal{A}\phi-\frac{1}{3}\mu\,,\label{Reformulation_eq:Solution_s_u_p}
\end{align}
where $c_{1}$ and $c_{2}$ are integration constants, and the matrix
$\mathds{P}$ has power series about $r=0$
\begin{equation}
\mathds{P}\left(r\right)=\sum^{+\infty}_{m=0}\mathds{P}_{m}\,r^{m}\,,\label{Reformulation_eq:Power_series_P_matrix}
\end{equation}
with matrix coefficients $\mathds{P}_{m}$ given by the recurrence
relation
\begin{equation}
\begin{aligned}\mathds{P}_{0} & =\mathds{I}_{2}\,,\\
\mathds{P}_{m} & =\frac{1}{m}\sum^{m-1}_{k=0}\widetilde{\Theta}_{m-1-k}\,\mathds{P}_{k}\,,\text{for }\ensuremath{m\geq}1\,,
\end{aligned}
\label{Reformulation_eq:Recurrence_relation_P}
\end{equation}
where $\mathds{I}_{2}$ is the identity matrix of size 2 and
\begin{equation}
\widetilde{\Theta}=\left[\begin{array}{cc}
{\displaystyle \frac{r\left(\mu-\frac{2M}{r^{3}}\right)}{2\left(1-\frac{2M}{r}\right)}} & \,{\displaystyle \frac{\mu-\frac{6M}{r^{3}}}{2r\left(1-\frac{2M}{r}\right)}}\\
r & 0
\end{array}\right]\,.\label{Reformulation_eq:Theta_tilde}
\end{equation}
In Eq.~(\ref{Reformulation_eq:Recurrence_relation_P}), the terms
$\widetilde{\Theta}_{m}$ represent the matrix coefficients of the
power series expansion of the matrix $\widetilde{\Theta}$ about $r=0$,
that is $\widetilde{\Theta}\left(r\right)=\sum^{+\infty}_{m=0}\widetilde{\Theta}_{m}\,r^{m}$.

Since $\mathds{P}_{0}=\mathds{I}_{2}$, the integration constants
$c_{1}$ and $c_{2}$ can be directly related to the boundary conditions
at the center of the star. The important physical quantity is the
ratio $c_{1}/c_{2}$ and not their separate values, other than being
zero. For instance, setting $c_{2}=1$, we find that $c_{1}$ is given
in terms of the central energy density, $\mu_{c}$, and the central
pressure, $p_{c}$, as
\begin{equation}
c_{1}=\frac{1}{6}\left(3p_{c}+\mu_{c}\right)\,.\label{Reformulation_eq:Integration_constant_c1}
\end{equation}

For a real analytic function $\mu$, the radius of convergence of
the power series representation of $\mathds{W}$ can be related to
the radius of convergence of the power series representation of $\mu$
and the fixed points of $2M$. We summarize the result as follows:
\begin{prop}
\label{Proposition:Radius_Analytic_solutions} Let the energy density
be a real analytic function at $r=0$, with radius of convergence
$R$. For real values of the central density and central pressure,
$\mu_{c}$ and $p_{c}$, the TOV equations admit real analytic solutions
at $r=0$ for compact stellar objects. Given the non-zero fixed points
of $2M\left(r\right)$, $\left\{ r_{i}\right\} $, with $i=1,...,n$,
the radius of convergence of the power series representation of $\mathds{W}$,
$\bar{R}$, is given by the smallest, positive, modulus of the fixed
points of $2M\left(r\right)$ or $R$, that is $\bar{R}=\min_{i=1,...,n}\left\{ \left|r_{i}\right|,R\right\} $.
\end{prop}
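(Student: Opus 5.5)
The plan is to reduce everything to the linear system~(\ref{Reformulation_eq:System_A_equation}) and use standard results on linear ODEs with a regular singular point and analytic coefficients.

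\emph{Step 1: analyticity of the coefficients.} Since $\mu$ is analytic for $|r|<R$, so is $M(r)=\frac{1}{2}\int_0^r\mu x^2dx$. Moreover $M=\frac{1}{6}\mu_c r^3+O(r^4)$, so $2M/r^3$, $2M/r$ and $6M/r^3$ are analytic in the same disk, and $2M/r^3-\mu\to-\frac{2}{3}\mu_c$, $6M/r^3-\mu\to 0$ as $r\to0$. It follows that the entries of $\widetilde{\Theta}$ in Eq.~(\ref{Reformulation_eq:Theta_tilde}) are analytic wherever $|r|<R$ and $1-2M(r)/r\neq0$, i.e.\ away from the nonzero fixed points $r_i$ of $2M(r)$ (at $r=0$ the denominator equals $1$). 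Hence $\widetilde{\Theta}$ is holomorphic in the disk $|r|<\bar{R}$ with $\bar{R}=\min\{|r_i|,R\}$. Generically it has genuine poles at the $r_i$, since there the numerators, e.g.\ $\mu-6M/r^3=\mu-3/r_i^2$, do not vanish.

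\emph{Step 2: convergence of the formal solution.} Write the system as $r\,\mathds{W}'=(\mathrm{diag}(1,0)+r\widetilde{\Theta}')\mathds{W}$. The leading matrix has eigenvalues $0$ and $1$, which differ by an integer. The Coddington--Levinson ansatz $\mathds{W}=\mathrm{diag}(r,1)\mathds{P}c$ removes this resonance: $\mathds{P}$ satisfies a system whose recurrence~(\ref{Reformulation_eq:Recurrence_relation_P}) has a nonzero divisor $m$ at every order, so no logarithms appear. I will then prove convergence by majorants. Fix $\rho<\bar{R}$. By Cauchy estimates, $\|\widetilde{\Theta}_k\|\le K\rho^{-k}$, so by induction $\|\mathds{P}_m\|\le C_m$, where $C_m=\frac{1}{m}\sum_{k<m}K\rho^{-(m-1-k)}C_k$. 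The generating function of $C_m$ solves $C'(r)=K(1-r/\rho)^{-1}C(r)$, giving $C=(1-r/\rho)^{-K\rho}$, which converges for $|r|<\rho$. Hence $\bar{R}$ is a lower bound on the radius of convergence. Alternatively, I can invoke the Fuchs/Frobenius theorem: the solution extends holomorphically in any simply connected domain where the coefficients are holomorphic, minus $r=0$, where it is analytic.

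\emph{Step 3: reality and the upper bound.} Reality follows because $\mu$ has real Taylor coefficients, so all $\widetilde{\Theta}_m$ and $\mathds{P}_m$ are real, and $c_1$ in Eq.~(\ref{Reformulation_eq:Integration_constant_c1}) is real. Thus $s$, $u$, $\mathcal{A}$ and $p$ are real analytic. Regularity at $r=0$ comes from $\mathds{P}_0=\mathds{I}_2$ together with $\mathcal{A}=r\phi s/(2u)$ and $r\phi\to2$.

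For equality of the radius, I argue that the coefficients of the equation~(\ref{Reformulation_eq:u_2nd_ODE}) for $u$ genuinely blow up at $r_i$. Since $r^2\phi^2=4(1-2M/r)$, the coefficient of $u$ behaves like $(3/r_i^2-\mu(r_i))/(r-r_i)$ times a nonzero factor, so $r_i$ is a (regular) singular point. Generic solutions then cannot be analytic there: the indicial exponents at $r_i$ make $u$ or $s$ non-analytic. Likewise, if $R$ is the limiting radius, $\mu$ is recovered from $u$ through the equation, so $\mathds{W}$ cannot continue beyond a singularity of $\mu$.

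\emph{Main obstacle.} This last upper-bound step is the hardest, because a specific choice of $c_1/c_2$ could accidentally be analytic at $r_i$. I would first compute the indicial equation at $r_i$ to show a non-integer or logarithmic exponent. Failing that, I would interpret the statement as the radius of convergence of the fundamental matrix $\mathds{P}$, for which the argument holds because both of its columns are involved.
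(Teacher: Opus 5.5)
Your Steps 1--2 match the paper's proof. The paper cites Coddington--Levinson for convergence of the formal solution wherever $\Theta$ is analytic, then identifies that radius with $\min\{|r_i|,R\}$. Your majorant bound for $\mathds{P}'=\widetilde{\Theta}\mathds{P}$ proves the cited fact, and your Step 1 supplies the identification. In fact $\widetilde{\Theta}$ is singular at every $r_i$, not just generically, because $\mu-1/r_i^2$ and $\mu-3/r_i^2$ cannot both vanish.

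This argument, yours and the paper's alike, yields only radius $\geq\bar R$. The cited theorem gives no more than that, even though the paper reads it as equality.

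\textbf{The gap: Step 3.} The gap is Step 3, and it cannot be closed for the statement as written, because equality fails for individual solutions. Two of your claims break down.

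\emph{The fixed points $r_i$.} At a simple zero $r_i$ of $1-2M/r$ one has $\widetilde{\Theta}_{11}=-\tfrac12\frac{d}{dr}\ln(1-2M/r)$. So the coefficient of $u'$ on the right of Eq.~(\ref{Reformulation_eq:u_2nd_ODE}) has residue $-\tfrac12$ there, while the coefficient of $u$ has only a simple pole. The exponents at $r_i$ are therefore $0$ and $\tfrac12$. The indicial computation you plan thus confirms your obstacle rather than removing it. The exponent-$0$ Frobenius solution is analytic at $r_i$, so for the value of $c_1/c_2$ that selects it, $\mathds{W}$ continues analytically across $r_i$. Tolman IV does exactly this at its real fixed points, where $\sqrt{g_{tt}}\propto\sqrt{1+r^2/A^2}$ is analytic.

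\emph{The singularities of $\mu$.} Your claim that $\mathds{W}$ cannot continue past a singularity of $\mu$ is false, and the paper's own Tolman IV example refutes it. There $\mu$ has double poles at $r=\pm iA/\sqrt2$. The fixed points are $\pm iA$ and two real ones of larger modulus, so $\bar R=|A|/\sqrt2$. Yet $u\propto\sqrt{1+r^2/A^2}$ and $s=u'$ have radius of convergence $|A|$. Recovering $M$ from $u$ fails exactly at those poles. The coefficient of $M'$ in the first-order equation you would solve is proportional to $(ru)'=(1+2r^2/A^2)/\sqrt{1+r^2/A^2}$, which vanishes there.

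\textbf{The repair.} Your fallback is the right fix. The statement holds for the fundamental matrix $\mathds{P}$, equivalently for the general solution with $c_1/c_2$ free. For a given $(\mu_c,p_c)$, only the lower bound is true in general.

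The tool you are missing is Liouville's formula. Since $\operatorname{tr}\widetilde{\Theta}=\widetilde{\Theta}_{11}$, one gets $\det\mathds{P}=(1-2M/r)^{-1/2}=\sqrt{g_{rr}}$. This is singular at every nonzero fixed point of $2M$, whatever the order of the zero, so $\mathds{P}$ cannot be continued analytically across any $r_i$.

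Now take a singular point $r_*$ of $\mu$ on $|r|=R$. If $\mathds{P}$ were analytic there with $\det\mathds{P}(r_*)\neq0$, then $\widetilde{\Theta}=\mathds{P}'\mathds{P}^{-1}$ would be analytic at $r_*$. That forces $\mu$ to be analytic at $r_*$: recover $M$ from $\widetilde{\Theta}_{11}-r\widetilde{\Theta}_{12}=\frac{2M/r^2}{1-2M/r}$. The only points left to check separately are those where $1-2M/r$ has an even-order pole. In Tolman IV, $\sqrt{g_{rr}}$ branches at $\pm iA/\sqrt2$, so equality does hold for $\mathds{P}$ there.
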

\begin{proof}
The formal solution~(\ref{Reformulation_eq:Solution_s_u}) follows
from the Coddington-Levinson algorithm~\citep{Coddington_Levinson_Book_1955}.
The proofs in Refs.~\citep{Coddington_Levinson_Book_1955,Coddington_Carlson_Book_1997}
show that the resulting power series solution has radius of convergence
equal to the radius of convergence of $\Theta$.

Let $\mu$ be real analytic at $r=0$, such that $M\sim\mathcal{O}\left(r^{3}\right)$.
In particular, this implies that $M\left(r\right)/r=0$ at $r=0$.
Then, from Eqs.~(\ref{Reformulation_eq:Mass_function_definition})--(\ref{Reformulation_eq:E_mu_relation}),
$\Theta$ is real analytic at $r=0$. The radius of convergence of
$\Theta$ is either given by the smallest modulus of the non-zero
fixed points of $2M\left(r\right)$ or the radius of convergence of
the power series expansion of $\mu$, depending on which one is smaller.
\end{proof}

Proposition~\ref{Proposition:Radius_Analytic_solutions} allows us
to determine the disk of convergence of the power series expansion
of $u$ about $r=0$. Naturally, the radius of the disk might be smaller
than the radius of the star. However, it is possible to find series
solutions for $\mathcal{A}$ and $p$ that may converge in regions
beyond that disk.

Given the power series solutions for $u$, we can use Eqs.~(\ref{Reformulation_eq:Solution_s_u_A})
and (\ref{Reformulation_eq:Solution_s_u_p}) to find series solutions
for the potentials $\mathcal{A}$ and $p$. The power series representation
of $u$ can be used to find, for instance, the Maclaurin series of
$\mathcal{A}$ and of $p$. However, we are not limited to use Maclaurin
series. Alternatively, we may consider using Pad\'e approximants
centered at $r=0$ to expand $\mathcal{A}$ and $p$ by rational functions.
Since rational functions are meromorphic, Pad\'e approximants may
have radius of convergence greater than the radius of convergence
of the Maclaurin series of $u$.

Moreover, using Eqs.~(\ref{Reformulation_eq:Solution_s_u_A}) and
(\ref{Reformulation_eq:Solution_s_u_p}), we may directly express
$\mathcal{A}$ and $p$ as the ratio of power series. Truncating the
series in the numerator and denominator to a given order does not,
in general, yield a Pad\'e approximant of corresponding order since
it may not verify the same conditions on the higher-order derivatives.
However, the result also does not necessarily have radius of convergence
equal to the Maclaurin series of $u$.

We will further this discussion in the following sections.

\subsection{Examples\label{subsec:Examples_given_mu}}

Specifying a real analytic energy density, we can find series solutions
for the pressure and for all quantities that characterize the spacetime's
geometry, using Eqs.~(\ref{Reformulation_eq:Solution_s_u})--(\ref{Reformulation_eq:Integration_constant_c1}).
To exemplify the application of the results in the previous subsection,
we will consider the particular cases of the exact Tolman IV solution~\citep{Tolman_1939},
the Heintzmann IIa solution~\citep{Heintzmann_1969}, and a numerical
solution of the TOV equations with a rational energy density.

Assuming a line element~(\ref{Reformulation_eq:General_line_element}),
the Tolman IV solution is characterized by the non-trivial metric
coefficients
\begin{equation}
\begin{aligned}g_{tt}= & B^{2}\left(1+\frac{r^{2}}{A^{2}}\right)\,,\\
g_{rr}= & \frac{1+2\frac{r^{2}}{A^{2}}}{\left(1+\frac{r^{2}}{A^{2}}\right)\left(1-\frac{r^{2}}{R^{2}}\right)}\,,
\end{aligned}
\label{Reformulation_eq:TolmanIV_metric}
\end{equation}
where $A$, $B$ and $R$ are assumed to be non-zero, real constants.
The energy density and the pressure are given by
\begin{equation}
\begin{aligned}\mu & =\frac{3A^{4}+A^{2}\left(7r^{2}+3R^{2}\right)+2r^{2}\left(3r^{2}+R^{2}\right)}{R^{2}\left(A^{2}+2r^{2}\right)^{2}}\,,\\
p & =\frac{R^{2}-A^{2}-3r^{2}}{R^{2}\left(A^{2}+2r^{2}\right)}\,.
\end{aligned}
\label{Reformulation_eq:TolmanIV_mu_p}
\end{equation}

The interior of a star can be modeled by the Tolman IV solution. Let
the radius of the star, $r_{b}$, be defined as the value of the circumferential
radius of the surface at which the pressure vanishes, that is 
\begin{equation}
p\left(r_{b}\right)=0\:.\label{Reformulation_eq:radius_star_definition}
\end{equation}
For the Tolman IV solution, we find
\begin{equation}
r_{b}=\sqrt{\frac{R^{2}-A^{2}}{3}}\,.
\end{equation}

From Eqs.~(\ref{Reformulation_eq:TolmanIV_metric}) and (\ref{Reformulation_eq:TolmanIV_mu_p}),
we see that the energy density and the pressure have singular points
at $r=\pm iA/\sqrt{2}$, and $\sqrt{g_{tt}}$ has singular points
at $r=\pm iA$. Therefore, depending on the values of the parameters
$R$ and $A$, the radius of convergence of the Maclaurin series of
$u$ and $p$, might be smaller than the radius of the star. However,
$p$ is a rational function, such that both the numerator and the
denominator are quadratic polynomials. Therefore, it is equal to its
Pad\'e approximant of order $\left[2,2\right]$ about $r=0$. That
is, using Eqs.~(\ref{Reformulation_eq:Solution_s_u})--(\ref{Reformulation_eq:Integration_constant_c1})
to find the series expansion to order 4, we can, in this case, find
the exact expression for the pressure.

The Tolman IV solution is simple enough that it is possible to find
exact expressions for all the thermodynamic variables, radius of the
star and the singular points of the various quantities. If the pressure
can be expressed as a rational function, we can find an exact expression
from Eqs.~(\ref{Reformulation_eq:Solution_s_u})--(\ref{Reformulation_eq:Integration_constant_c1}),
using Pad\'e approximant theory. In general this is not the case.

Consider the exact solution first derived by Heintzmann, such that
the nontrivial metric components in Schwarzschild coordinates are
given by
\begin{equation}
\begin{aligned}g_{tt} & =A^{2}\left(ar^{2}+1\right)^{3}\,,\\
g_{rr} & =\left(1-3ar^{2}\frac{C\left(4ar^{2}+1\right)^{-\frac{1}{2}}+1}{2\left(ar^{2}+1\right)}\right)^{-1}\,,
\end{aligned}
\label{Reformulation_eq:Heintzmann_metric}
\end{equation}
with parameters $A$, $a$ and $C$ . The energy density and the pressure
of the fluid source are given by
\begin{equation}
\begin{aligned}\mu & =\frac{3a\left[\left(4a^{2}r^{4}+13ar^{2}+3\right)\sqrt{4ar^{2}+1}+C\left(9ar^{2}+3\right)\right]}{2\left(ar^{2}+1\right)^{2}\left(4ar^{2}+1\right)^{\frac{3}{2}}}\,,\\
p & =-\frac{3a\left(7aCr^{2}+3\left(ar^{2}-1\right)\sqrt{4ar^{2}+1}+C\right)}{2\left(ar^{2}+1\right)^{2}\sqrt{4ar^{2}+1}}\,.
\end{aligned}
\label{Reformulation_eq:Heintzmann_energy_pressure}
\end{equation}

Depending on the values of $a$ and $C$, $\sqrt{g_{tt}}$ and $p$
might be singular at points in the complex plane with modulus smaller
than the radius of the star. The singular points of $\sqrt{g_{tt}}$
are solutions to $ar^{2}=-1$, and the singular points $p$ are solutions
to $ar^{2}=-1$ or to $4ar^{2}=-1$. The Heintzmann IIa solution provides
an example where using Pad\'e approximants allow us to locally approximate
the solution to points outside the disk of convergence of the Maclaurin
series, and outside the disk of convergence of the series ratio solutions
following from Eqs.~(\ref{Reformulation_eq:Solution_s_u})--(\ref{Reformulation_eq:Solution_s_u_p}).

Figure~\ref{Figure:Pressure_profile_HeintzmannIIa} shows the behavior
of the pressure profile, its Taylor polynomial at $r=0$ truncated
to a given order, and the truncated series ratio solution following
from Eqs.~(\ref{Reformulation_eq:Solution_s_u})--(\ref{Reformulation_eq:Integration_constant_c1}).
In that particular realization of the Heintzmann IIa solution, $\sqrt{g_{tt}}$
and $p$ have singular points with modulus smaller than the radius
of the star. Indeed, in that case, both the radius of convergence
of the Maclaurin series of the exact solution, and the radius of convergence
of the series ratio solutions following from Eqs.~(\ref{Reformulation_eq:Solution_s_u})--(\ref{Reformulation_eq:Solution_s_u_p}),
are smaller than the radius of the star. Therefore, neither can be
used to accurately describe the pressure for all points in the interior
of the star. However, the pressure has only isolated singular points
and can be approximated by a rational function. For instance, the
diagonal Pad\'e approximant of order $\left[8,8\right]$, for the
spacetime parameters considered for Figure~\ref{Figure:Pressure_profile_HeintzmannIIa},
differs pointwise from the exact values of the normalized pressure,
for any point within the interior of the star, no more than $3.6\times10^{-4}$.

\begin{figure}
\includegraphics[totalheight=0.16\paperheight]{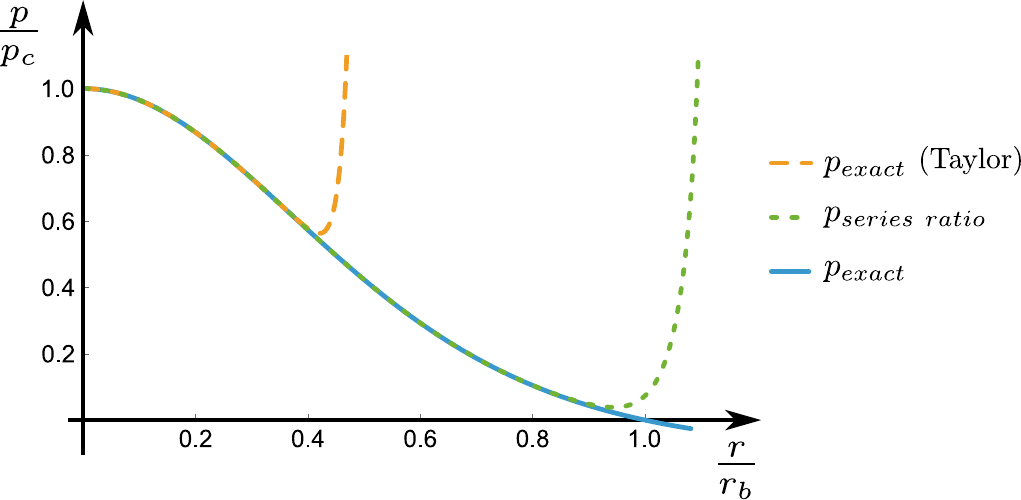}\caption{\label{Figure:Pressure_profile_HeintzmannIIa}Pressure profile of
the Heintzmann solution~(\ref{Reformulation_eq:Heintzmann_energy_pressure})
assuming the parameters $A=1$, $a=16$ and $C=-\frac{1}{3}$, its
Taylor polynomial about $r=0$ truncated to order 30, and the series
ratio solution from Eqs.~(\ref{Reformulation_eq:Solution_s_u})--(\ref{Reformulation_eq:Integration_constant_c1})
truncated to order 30.}

\end{figure}

For the Tolman IV and Heintzmann IIa spacetimes, the energy density
is given by an exact expression. However, Eqs.~(\ref{Reformulation_eq:Solution_s_u})--(\ref{Reformulation_eq:Integration_constant_c1})
can be used to approximate solutions to the TOV equations for an analytic
energy density, that cannot be expressed in exact form and can only
be approximated through numerical methods.

The TOV equations form a very stiff system of differential equations.
Moreover, Eq.~(\ref{Reformulation_eq:Original_TOV_p}) has a singular
point at $r=0$, such that numerical approximations might require
high working precision. Nonetheless, a fundamental limitation of numerical
solutions is that it might be very difficult to determine general
dependencies of the solutions on the parameters. In that regard, considering
series solutions, it is possible to establish general analytic expressions
to estimate the properties of the solutions, considering only a few
terms of the series.

To exemplify this idea, consider a solution of the TOV equations with
energy density
\begin{equation}
\mu=a+\frac{b}{1-cr^{2}}\,.\label{Reformulation_eq:numerical_example_mu_rational}
\end{equation}
Using Eqs.~(\ref{Reformulation_eq:Solution_s_u})--(\ref{Reformulation_eq:Integration_constant_c1}),
the Pad\'e approximant of order $\left[2,2\right]$ for the pressure
yields a rational function where both numerator and denominator are
polynomials of order 2. Computing its positive roots, we find the
analytic approximation for the radius of the star
\begin{equation}
r_{b}\approx\sqrt{\frac{60p_{c}}{5\left(a+b\right)^{2}+2\left(a+b\right)\left[10p_{c}+3bc\left(\frac{6}{a+b}-\frac{5}{a+b+p_{c}}-\frac{1}{a+b+3p_{c}}\right)\right]}}\,.\label{Reformulation_eq:numerical_example_rb_analytic}
\end{equation}

Table~\ref{Table:Numerical_example_comparison_radius_approximations}
presents a comparison between the circumferential radii obtained by
numerically integrating the TOV equations and those predicted by the
analytic approximation~(\ref{Reformulation_eq:numerical_example_rb_analytic}),
for stars with energy density function~(\ref{Reformulation_eq:numerical_example_mu_rational}).
In the example, for lower values of the compactness parameter, the
analytic approximation closely matches the exact value. For larger
compactness values, the analytic approximation significantly deviates
from the numeric result and a higher order analytic approximation
should be used. We note, however, that such stars become dynamically
unstable at higher values of the compactness parameter. For instance,
keeping $a=1$, $b=-\frac{1}{5}$ and $c=\frac{1}{4}$, and considering
$p_{c}\apprge0.32$, such that $M/r_{b}\apprge0.29$, the star is
unstable under linear, adiabatic radial perturbations.

\begin{table}
\begin{tabular}{|c|c|c|c|c|}
\hline 
 & {\small$\frac{M}{r_{b}}$} & {\small ~~$r_{b}\,\left(\text{Num.}\right)$~~} & {\small ~~$r_{b}\,\left(\text{An. approx.}\right)$~~} & ~~~~~~$\delta$~~~~~~\tabularnewline
\hline 
\hline 
{\small ~~$p_{c}=0.05$~~} & {\small ~~0.101~~} & {\small 0.883} & {\small 0.881} & {\small 0.3\%}\tabularnewline
\hline 
{\small$p_{c}=0.10$} & {\small 0.167} & {\small 1.159} & {\small 1.148} & {\small 0.9\%}\tabularnewline
\hline 
{\small$p_{c}=0.15$} & {\small 0.213} & {\small 1.333} & {\small 1.309} & {\small 1.8\%}\tabularnewline
\hline 
{\small$p_{c}=0.20$} & {\small 0.247} & {\small 1.461} & {\small 1.418} & {\small 3.0\%}\tabularnewline
\hline 
{\small$p_{c}=0.25$} & {\small 0.271} & {\small 1.566} & {\small 1.497} & {\small 4.5\%}\tabularnewline
\hline 
{\small$p_{c}=0.30$} & {\small 0.288} & {\small 1.668} & {\small 1.556} & {\small 6.7\%}\tabularnewline
\hline 
\end{tabular}

\caption{\label{Table:Numerical_example_comparison_radius_approximations}
Values of the circumferential radii of stars with energy density function~(\ref{Reformulation_eq:numerical_example_mu_rational})
for different values of the central pressure. In all cases $a=1$,
$b=-\frac{1}{5}$ and $c=\frac{1}{4}$. In the second column it is
indicated the value of the compactness parameter of the star; in the
third column, the value of $r_{b}$ found from the numerical integration;
in the fourth column it is presented the value of $r_{b}$ found from
the analytic approximation~(\ref{Reformulation_eq:numerical_example_rb_analytic});
in the last column is the percent error, $\delta$, for the analytic
approximation of $r_{b}$ relative to the numeric approximation.}
\end{table}

\section{Series Solutions to the TOV with an Equation of State\label{Section:General_Solution_EOS}}

The results in the previous section allow us to compute general series
solutions to the TOV equations, if the energy density is a known real
analytic function at $r=0$. However, physically, the energy density
is not a known function of the spacetime \emph{ab initio}. Solutions
for compact stellar objects are found by coupling the TOV equations
with a barotropic equation of state for the fluid source, relating
the pressure with the energy density. In this section, assuming a
sufficiently regular barotropic EoS, we derive series solutions for
the spacetime potentials.

Let the energy density and the pressure of the fluid source be related
by an EoS of the form
\begin{equation}
f\left(\mu,p\right)=0\,,\label{General_Power_Series_eq:EoS_general}
\end{equation}
where $f$ is an analytic function of the matter variables at the
center of the star, and $\partial_{\mu}f\neq0$, also at the center
of the star. Then, the spacetime is characterized by the TOV equations
coupled with the constraint equation~(\ref{General_Power_Series_eq:EoS_general}).

Using the chain rule, Eq.~(\ref{General_Power_Series_eq:EoS_general})
implies
\begin{equation}
f_{\mu}\partial_{r}\mu+f_{p}\partial_{r}p=0\,,\label{General_Power_Series_eq:Derivative_EoS}
\end{equation}
where the subscript in $f$ represents the partial derivative with
respect to the indicated thermodynamic variable. Using Eqs.~(\ref{Reformulation_eq:Original_TOV_p})--(\ref{Reformulation_eq:A_phi_p_relation})
we have
\begin{equation}
\frac{dp}{dr}=-\frac{2\left(\mu+p\right)\mathcal{A}}{r\phi}\,.\label{General_Power_Series_eq:1p1p2_TOV_pressure}
\end{equation}
Combining Eqs.~(\ref{General_Power_Series_eq:Derivative_EoS}) and
(\ref{General_Power_Series_eq:1p1p2_TOV_pressure}) yields the important
relation
\begin{equation}
\frac{d\mu}{dr}=\frac{2\left(\mu+p\right)f_{p}}{r\phi f_{\mu}}\mathcal{A}\,.\label{General_Power_Series_eq:EoS_derivative_mu}
\end{equation}

We remark that defining the square of the adiabatic speed of sound
as $c^{2}_{s}\equiv\left(dp/d\mu\right)_{S}$, where the derivative
is taken at constant entropy, we have $c^{2}_{s}=-f_{\mu}/f_{p}$.
This implies that the adiabatic index of a perfect fluid, $\gamma$,
is given by
\begin{equation}
\gamma=-\frac{\left(\mu+p\right)f_{\mu}}{pf_{p}}\,.\label{General_Power_Series_eq:Adiabatic_index}
\end{equation}
Therefore, Eq.~(\ref{General_Power_Series_eq:EoS_derivative_mu})
can be equivalently written in terms of the adiabatic speed of sound
or the adiabatic index.

Continuing, Eq.~(\ref{General_Power_Series_eq:EoS_derivative_mu})
relates the derivative of the energy density directly with $\mu$,
$p$, $\mathcal{A}$ and the derivatives of $f$. Following the results
in Appendix~\ref{Section:Appendix_Proof-of-Theorem}, if $\mu$,
$p$, $\mathcal{A}$ and $f$ are analytic at $r=0$, taking the derivative
of order $n$ of (\ref{General_Power_Series_eq:EoS_derivative_mu})
implies that the coefficient of the $\left(n+1\right)^{th}$-order
term of the power series of $\mu$ about $r=0$ depends, at most,
on the $n^{th}$-order derivatives of $\mu$, $p$ and $\mathcal{A}$
at the center. In turn, from Eq.~(\ref{Reformulation_eq:Solution_s_u})--(\ref{Reformulation_eq:Integration_constant_c1}),
for $n>0$, the coefficients of the $n^{th}$-order terms of the power
series of $\mathcal{A}$ and of $p$ depend at most on the derivative
$\mu^{\left(n-1\right)}\left(0\right)$. Then, provided the central
energy density, $\mu_{c}$, and the central pressure, $p_{c}$, we
can use Eqs.~(\ref{Reformulation_eq:Solution_s_u})--(\ref{Reformulation_eq:Integration_constant_c1})
together with Eq.~(\ref{General_Power_Series_eq:EoS_derivative_mu})
and its derivatives to compute the coefficients of the power series
of $\mu$, $p$ and $\mathcal{A}$ recursively.

Using this procedure, the coefficients of the power series of the
potentials can be computed symbolically to arbitrary order, for a
general barotropic EoS or for a specific family of EoS. Below, we
present the first terms of the power series of $\mu$, $p$ and $\mathcal{A}$
about $r=0$, up to order 4. Expectedly, in the general case, the
expressions for the higher order terms become rapidly quite large
due to the number of higher order derivatives of the function $f$.

\begin{equation}
\begin{aligned}\mu\left(r\right) & =\mu_{c}+{\displaystyle \frac{\left(\mu_{c}+p_{c}\right)\left(\mu_{c}+3p_{c}\right)f_{p}}{12f_{\mu}}}r^{2}\\
 & +\frac{\mu^{2}_{c}+3p^{2}_{c}+4\mu_{c}p_{c}}{48f_{\mu}}\left[\frac{\left(4\mu_{c}+9p_{c}\right)\left(f_{p}\right)^{2}}{15f_{\mu}}-\frac{\left(\mu_{c}+p_{c}\right)\left(\mu_{c}+3p_{c}\right)f_{pp}}{6}\right.\\
 & \left.+\frac{\left(\mu_{c}+p_{c}\right)\left(\mu_{c}+3p_{c}\right)f_{p}f_{\mu p}}{3f_{\mu}}-\frac{\left(\mu_{c}+p_{c}\right)\left(\mu_{c}+3p_{c}\right)\left(f_{p}\right)^{2}f_{\mu\mu}}{6\left(f_{\mu}\right)^{2}}-p_{c}f_{p}\right]r^{4}+\mathcal{O}\left(r^{6}\right)\,,\\
p\left(r\right) & =p_{c}{\displaystyle -\frac{\left(\mu_{c}+p_{c}\right)\left(\mu_{c}+3p_{c}\right)}{12}}r^{2}{\displaystyle -\frac{\left(\mu_{c}+p_{c}\right)\left(\mu_{c}+3p_{c}\right)\left[\left(4\mu_{c}+9p_{c}\right)f_{p}-15p_{c}f_{\mu}\right]}{720f_{\mu}}}r^{4}+\mathcal{O}\left(r^{6}\right)\,,\\
\mathcal{A}\left(r\right) & ={\displaystyle \frac{\mu_{c}+3p_{c}}{6}}r+{\displaystyle \frac{\left(\mu_{c}+3p_{c}\right)\left[3\left(\mu_{c}+p_{c}\right)f_{p}-5\left(\mu_{c}+3p_{c}\right)f_{\mu}\right]}{360f_{\mu}}}r^{3}+\mathcal{O}\left(r^{5}\right)\,,
\end{aligned}
\label{General_Power_Series_eq:General_series}
\end{equation}
where the derivatives of the function $f$ are evaluated at $\left(\mu_{c},p_{c}\right)$.

As previously discussed, regularity of solutions of the TOV equations
require that $\mathcal{A}$ vanishes at $r=0$. In turn, this implies
that $\partial_{r}p=0$ at the center of the star. Moreover, if $f_{\mu}\neq0$
at the center of the star, it follows from Eq.~(\ref{General_Power_Series_eq:EoS_derivative_mu})
that $\partial_{r}\mu=0$. Indeed, the expressions in Eq.~(\ref{General_Power_Series_eq:General_series})
suggest that the energy density and the pressure are even functions
of the radial coordinate, whereas $\mathcal{A}$ is an odd function.
The following result establishes that this is a consequence of the
regularity of the EoS at the center of the star.
\begin{thm}
\label{Theorem:Even_power_series}Let a perfect fluid with an equation
of state~(\ref{General_Power_Series_eq:EoS_general}). If real analytic
solutions at $r=0$ of the TOV equations together with an equation
of state exist for compact stellar objects, then the energy density
is an even function of the circumferential radius coordinate, such
that its power series about $r=0$ contains only even powers of $r$.
\end{thm}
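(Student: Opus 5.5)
The plan is a reflection-symmetry plus uniqueness argument, backed by a direct induction on Taylor coefficients for the case where uniqueness is not automatic. I would work with the closed first-order system in the variables $(\mu,M,p)$, or equivalently $(\mu,p,\mathcal{A})$. For $r>2M$, the equations \eqref{Reformulation_eq:Original_TOV_p}, \eqref{Reformulation_eq:Original_TOV_M} and \eqref{General_Power_Series_eq:EoS_derivative_mu} read
\begin{equation*}
\frac{dp}{dr}=-\frac{(\mu+p)(r^{3}p+2M)}{2r(r-2M)}\,,\qquad \frac{dM}{dr}=\frac{1}{2}\mu r^{2}\,,\qquad \frac{d\mu}{dr}=-\frac{f_{p}}{f_{\mu}}\frac{dp}{dr}\,.
\end{equation*}
Suppose $(\mu,p,M)$ is a real analytic solution at $r=0$. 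Define the reflected functions $\hat\mu(r)=\mu(-r)$, $\hat p(r)=p(-r)$ and $\hat M(r)=-M(-r)$. A direct substitution shows that $(\hat\mu,\hat p,\hat M)$ again solves the system. The right-hand side of the $M$ equation is even in $r$ under $M\mapsto -M(-r)$. In the $p$ equation, the factor $(r^{3}p+2M)/(r(r-2M))$ is invariant under the combined change $r\to -r$, $M\to -M$, while $d/dr$ flips sign. The EoS relation is autonomous. The reflected solution also has the same central data $\mu_{c}$, $p_{c}$ and $M(0)=0$.

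The next step is to show that analytic solutions with given central data are unique; then $\hat\mu=\mu$ and $\hat p=p$, so both are even. Because $r=0$ is a singular point, I would not invoke Picard--Lindel\"of. Instead I would argue at the level of formal power series, which suffices for analytic functions. Using the recursion described just before \eqref{General_Power_Series_eq:General_series}, each Taylor coefficient $\mu_{n+1},p_{n+1},M_{n+3}$ is determined by lower-order ones. In the $\mathcal{A}$-formulation this is the Coddington--Levinson recurrence \eqref{Reformulation_eq:Recurrence_relation_P}, with the division by $m$ never vanishing for $m\ge1$, together with $c_{1}/c_{2}$ fixed by \eqref{Reformulation_eq:Integration_constant_c1}. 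Concretely, writing $\mu=\sum\mu_{k}r^{k}$, the coefficient of $r^{n}$ in $d\mu/dr$ gives $(n+1)\mu_{n+1}$. On the right-hand side this coefficient is a polynomial in $\mu_{0},\dots,\mu_{n}$, $p_{0},\dots,p_{n}$ and the derivatives of $f$ at $(\mu_{c},p_{c})$. This uses $f_{\mu}(\mu_{c},p_{c})\neq0$, so that $f_{p}/f_{\mu}$ is analytic near the centre. Since $\mathcal{A}/(r\phi)$ is analytic with the leading behaviour read off from \eqref{General_Power_Series_eq:General_series}, all coefficients are fixed by $(\mu_{c},p_{c})$. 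Two analytic solutions with identical Taylor series coincide, so $\mu(-r)=\mu(r)$.

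As a cross-check, and as an alternative if the uniqueness bookkeeping becomes delicate, I would run a direct parity induction. The hypothesis is that $\mu_{k}=p_{k}=0$ for all odd $k<2j+1$, together with $M_{k}=0$ for even $k$ and $\mathcal{A}_{k}=0$ for even $k$, up to the corresponding orders. Under this hypothesis $(\mu+p)f_{p}/f_{\mu}$ is even to the relevant order, since it is an analytic function of even series. Also $\mathcal{A}/(r\phi)$ is odd divided by odd, hence even times $1/r$. Therefore the coefficient producing $\mu_{2j+1}$ vanishes. The matching statements for $p$ and $\mathcal{A}$ follow from \eqref{Reformulation_eq:Recurrence_relation_P}, because $\widetilde{\Theta}$ has parity structure: its diagonal entry is odd, its off-diagonal $(1,2)$ entry is odd and its $(2,1)$ entry is odd. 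This makes the even and odd parts of $\mathds{P}$ decouple.

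The main obstacle is the singular point at $r=0$. A regular singular point could in principle admit a second, non-even analytic branch. To rule this out, I would check that the indicial exponents of \eqref{Reformulation_eq:System_A_equation} are $0$ and $1$. I would also confirm that the potential resonance at $m=1$ does not produce a free parameter beyond the ratio $c_{1}/c_{2}$. That ratio is forced by the regularity condition $\mathcal{A}(0)=0$ and by $p(0)=p_{c}$, so no analytic freedom remains and the parity argument closes.
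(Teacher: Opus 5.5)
Your proof is correct, and your main argument takes a genuinely different route from the paper's.

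\textbf{The paper's route.} It works order by order:
\begin{itemize}
\item Fa\`a di Bruno's formula (Lemmas~\ref{Lemma:Vanishin_odd_derivative_EoS} and \ref{Lemma:Vanishing_odd_derivative_p}) shows that the odd derivatives of $f_p/f_\mu$ and of $p$ vanish once those of $\mu$ do.
\item A determinant formula for quotients of series gives $\widetilde{\Theta}_{2n}=0$ (Lemma~\ref{Lemma:mu_Theta_relation}).
\item Hence $\mathds{P}_{2n+1}=0$ (Lemma~\ref{Lemma:Even_theta_zero_vanishin_odd_P}), and the even-order derivatives of $\mathcal{A}/(r\phi)$ vanish (Lemma~\ref{Lemma:A_over_r_phi}).
\item The Leibniz expansion of Eq.~(\ref{General_Power_Series_eq:EoS_derivative_mu}) then kills $\mu^{(2j+1)}(0)$.
\end{itemize}
Your ``cross-check'' paragraph is essentially that induction.

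\textbf{Your route.} You replace the bookkeeping by the symmetry $(r,\mu,p,M)\mapsto(-r,\mu,p,-M)$, together with uniqueness of the formal solution with data $(\mu_c,p_c)$.

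The uniqueness step is cleanest with $M=r^{3}m$:
\begin{itemize}
\item $rm'=\mu/2-3m$ gives $(n+3)m_n=\mu_n/2$.
\item $p'=-(\mu+p)G$ and $\mu'=(f_p/f_\mu)(\mu+p)G$, with $G=r(p+2m)/[2(1-2r^{2}m)]$.
\item So $(n+1)p_{n+1}$ and $(n+1)\mu_{n+1}$ are fixed by lower-order coefficients, with nonzero prefactors.
\end{itemize}
This already disposes of your closing concern about indicial exponents and resonances, so that paragraph can be dropped. In these variables the reflection is simply $\hat\mu(r)=\mu(-r)$, $\hat p(r)=p(-r)$, $\hat m(r)=m(-r)$.

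\textbf{What each route buys.} Yours is shorter and yields the corollary (and the oddness of $M$) at once. It also uses $f_p/f_\mu$ only as an analytic function of the pair $(\mu,p)$. The paper's lemmas instead treat $p$ and $f_p/f_\mu$ as functions of $\mu$ alone, which tacitly needs more than $f_\mu\neq0$ (for instance $f_p\neq0$). In exchange, the paper's route gives explicit order-by-order information that feeds directly into the simplified recurrence~(\ref{General_Power_Series_eq:Recurrence_relation_P_simplified}).

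\textbf{One slip to fix.} The factor $(r^{3}p+2M)/(r(r-2M))=2G$ is not invariant under $r\to-r$, $M\to-M$; it changes sign. It is exactly this sign change that compensates the sign flip of $d/dr$.

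You should also state explicitly that the equations hold on a two-sided neighbourhood of $r=0$ by analytic continuation, or else argue purely with formal power series. This matters because the reflected functions are evaluated at $r<0$.
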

The proof of Theorem~\ref{Theorem:Even_power_series} is given in
Appendix~\ref{Section:Appendix_Proof-of-Theorem}. In proving this
theorem, we have also shown the following:
\begin{cor}
In the conditions of Theorem~\ref{Theorem:Even_power_series}, the
pressure, $p$, is an even function of the circumferential radius,
and $\mathcal{A}$ is an odd function.
\end{cor}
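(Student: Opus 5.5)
The plan is to prove the corollary together with Theorem~\ref{Theorem:Even_power_series} through a single induction on the Taylor coefficients. The key observation is that the TOV system with the equation of state is invariant under $r\mapsto-r$, provided $\mu,p$ are kept fixed and $\mathcal{A}$, $r\phi$ transform appropriately. From Eq.~(\ref{Reformulation_eq:Mass_function_definition}), if $\mu$ is even then $M$ is odd. Hence $M/r$, $M/r^{3}$ and $r\phi=2\sqrt{1-2M/r}$ are even, and, by Eq.~(\ref{Reformulation_eq:E_mu_relation}), so is $\mathcal{E}$. We will therefore show inductively that $\mu$ and $p$ contain only even powers and $\mathcal{A}$ only odd powers.

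Write $\mu=\sum\mu_{k}r^{k}$, $p=\sum p_{k}r^{k}$ and $\mathcal{A}=\sum a_{k}r^{k}$. Regularity gives $a_{0}=0$, and by the discussion before the theorem, $\mu_{1}=p_{1}=0$. The inductive hypothesis is that for all odd $k\le 2n-1$ we have $\mu_{k}=p_{k}=0$, and for all even $k\le 2n$ we have $a_{k}=0$. The step uses three relations.

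\emph{Pressure.} Eq.~(\ref{General_Power_Series_eq:1p1p2_TOV_pressure}) reads $p'=-2(\mu+p)\mathcal{A}/(r\phi)$. The coefficient of $r^{2n}$ on the right-hand side involves only $\mu_{j},p_{j},a_{j}$ with $j\le 2n$, through $\mathcal{A}$ and $\mu$. Its parity is (even)$\times$(odd)/(even) $=$ odd, so that coefficient vanishes and $p_{2n+1}=0$.

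\emph{Energy density.} Eq.~(\ref{General_Power_Series_eq:EoS_derivative_mu}) is treated the same way. Since $f$ is analytic and $f_{\mu}\neq0$ at $(\mu_{c},p_{c})$, the functions $f_{p}(\mu,p)$ and $1/f_{\mu}(\mu,p)$ are analytic in $(\mu,p)$. Composed with series $\mu,p$ whose low-order parts are even, their Taylor coefficients up to the relevant order are even. Hence $\mu_{2n+1}=0$.

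\emph{Acceleration.} Here we use the Riccati equation~(\ref{Reformulation_eq:Riccati_A}), multiplied by $r\phi$: $r\phi\,\mathcal{A}'=3\mathcal{E}+\phi\mathcal{A}-2\mathcal{A}^{2}$. Alternatively, one can use $\mathcal{A}\phi=p+2M/r^{3}$, which gives $a_{2n+2}$ from $p_{2n+1}$ and $M$. This second route is simpler: $r\phi$ is even with nonzero constant term, $2M/r^{3}$ has the parity of $\mu$, so $\mathcal{A}=r(p+2M/r^{3})/(r\phi)$ is $r$ times an even series up to order $2n+1$. Thus $a_{2n+2}=0$.

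The main obstacle is bookkeeping. We must confirm that each coefficient extracted at order $N$ depends only on lower-order (already controlled) coefficients, so that the truncated parity claims suffice. In particular, $M$ up to order $r^{2n+3}$ requires $\mu$ only up to $r^{2n}$, and the composition $f_{p}(\mu(r),p(r))$ requires the orders of $\mu$ and $p$ not to exceed those already established. I would state a small lemma to handle this: if $g$ is analytic at $(\mu_{c},p_{c})$ and $\mu,p$ have vanishing odd coefficients up to order $2n+1$, then so does $g(\mu,p)$. With the lemma and the order count in place, the induction closes and yields the theorem and the corollary simultaneously. Uniqueness of the analytic solution for given $(\mu_{c},p_{c})$ is not needed, since all coefficients are forced by the recurrence.
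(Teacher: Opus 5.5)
Your proof is correct, and it reaches the corollary by a genuinely different route from the paper's. In Appendix~\ref{Section:Appendix_Proof-of-Theorem} the corollary is only a by-product of the proof of Theorem~\ref{Theorem:Even_power_series}, assembled from the following pieces:
\begin{itemize}
\item The oddness of $\mathcal{A}/(r\phi)=s/(2u)$ comes from the Coddington--Levinson representation. Determinant formulas for quotients of series show that the even coefficients $\widetilde{\Theta}_{2k}$ vanish (Lemma~\ref{Lemma:mu_Theta_relation}). Hence the odd $\mathds{P}_{2k+1}$ vanish (Lemma~\ref{Lemma:Even_theta_zero_vanishin_odd_P}), so $u$ is even and $s$ is odd (Lemma~\ref{Lemma:A_over_r_phi}).
\item The parity of $p$ and of $f_p/f_\mu$ comes from Fa\`a di Bruno's formula, applied to them as functions of $\mu$ alone (Lemmas~\ref{Lemma:Vanishin_odd_derivative_EoS} and \ref{Lemma:Vanishing_odd_derivative_p}).
\item The induction is closed by the Leibniz expansion of Eq.~(\ref{General_Power_Series_eq:EoS_derivative_mu}).
\end{itemize}
You keep only this last step.

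You replace the whole $\mathds{P}$, $\widetilde{\Theta}$ machinery by the defining identity $\mathcal{A}=r\,(p+2M/r^{3})/(r\phi)$. It yields $a_{0}=0$, and $a_{2n+2}=0$ as soon as $\mu$ and $p$ are controlled up to order $2n+1$. You also obtain the parity of $p$ from the pressure equation~(\ref{General_Power_Series_eq:1p1p2_TOV_pressure}) rather than from the EoS.

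Your version is more elementary and proves the theorem and the corollary in one stroke. It is also more robust. Treating $f_p/f_\mu$ as an analytic function of the pair $(\mu,p)$ uses only $f_\mu\neq0$. The paper's Lemmas~\ref{Lemma:Vanishin_odd_derivative_EoS}--\ref{Lemma:Vanishing_odd_derivative_p} instead need $p$ to be a smooth function of $\mu$, which $f_\mu\neq0$ does not provide: that condition makes $\mu$ a function of $p$, not conversely.

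What the paper's route buys is a statement directly about the recurrence: the vanishing of the odd $\mathds{P}_m$, which underlies the simplified recurrence~(\ref{General_Power_Series_eq:Recurrence_relation_P_simplified}). In your approach this follows afterwards in one line, since for even $\mu$ the matrix $\widetilde{\Theta}$ is odd.

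One bookkeeping correction. In the energy-density step you only know $\mu_k=p_k=0$ for odd $k\le 2n-1$, so your composition lemma must be applied at order $2n-1$ rather than $2n+1$. That is enough, because the odd coefficients of $f_p/f_\mu$ entering the $r^{2n}$ coefficient have degree at most $2n-1$.
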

Theorem~\ref{Theorem:Even_power_series} and its corollary establish
a relation between the behavior of the thermodynamic variables in
a neighborhood of the center of the star and the non-vanishing of
the adiabatic speed of sound. These results allow us to simplify the
computation of the coefficients of the power series. Under the conditions
of Theorem~\ref{Theorem:Even_power_series}, the odd-order terms
of the power series expansion of $\mathds{P}$, Eq.~(\ref{Reformulation_eq:Power_series_P_matrix}),
are identically zero. Then, the recurrence relation~(\ref{Reformulation_eq:Recurrence_relation_P})
simplifies to
\begin{equation}
\mathds{P}_{m}=\begin{cases}
\mathds{I}_{2} & \text{, if }m=0\\
\boldsymbol{0} & \text{, if }m\text{ odd}\\
{\displaystyle \frac{1}{m}\sum^{\frac{m}{2}-1}_{s=0}\tilde{\Theta}_{m-1-2s}\,\mathds{P}_{2s}} & \text{, if }m\text{ even}
\end{cases}\label{General_Power_Series_eq:Recurrence_relation_P_simplified}
\end{equation}

Knowledge of the derivatives of an analytic function at a given point
can be used to find the power series representation of the function.
However, we can approximate the function using other type of series.
As we have discussed in the previous section, for instance, we can
use Pad\'e approximants. Using Eqs.~(\ref{General_Power_Series_eq:General_series}),
we present the general Pad\'e approximants of order $\left[2,2\right]$
for the energy density, $\mu_{\left[2,2\right]}$, and the pressure,
$p_{\left[2,2\right]}$, for an EoS characterized by Eq.~(\ref{General_Power_Series_eq:Derivative_EoS}):
\begin{align}
\mu_{\left[2,2\right]} & =\frac{\mu_{c}+\mathfrak{m}_{1}\,r^{2}}{1+\mathfrak{m}_{2}\,r^{2}}\,,\label{General_Power_Series_eq:Pade_mu_general}\\
p_{\left[2,2\right]} & =\frac{p_{c}-\frac{1}{60}\left[5\mu_{c}\left(\mu_{c}+4p_{c}\right)+\frac{p_{c}\left(4\mu_{c}+9p_{c}\right)f_{p}}{f_{\mu}}\right]r^{2}}{1+\left(\frac{p_{c}}{4}-\frac{\left(4\mu_{c}+9p_{c}\right)f_{p}}{60f_{\mu}}\right)r^{2}}\,,\label{General_Power_Series_eq:Pade_p_general}
\end{align}
where
\begin{equation}
\begin{aligned}\mathfrak{m}_{1} & =\frac{1}{4}\mu_{c}p_{c}+\frac{\left[5\mu_{c}\left(\mu_{c}+p_{c}\right)\left(\mu_{c}+3p_{c}\right)f_{\mu\mu}+2\left(\mu^{2}_{c}+15p^{2}_{c}+11\mu_{c}p_{c}\right)f_{\mu}\right]f_{p}}{120\left(f_{\mu}\right)^{2}}\\
 & +\frac{\mu_{c}\left(\mu_{c}+p_{c}\right)\left(\mu_{c}+3p_{c}\right)f_{pp}}{24f_{p}}-\frac{\mu_{c}\left(\mu_{c}+p_{c}\right)\left(\mu_{c}+3p_{c}\right)f_{\mu p}}{12f_{\mu}}\,,\\
\mathfrak{m}_{2} & =\frac{1}{4}p_{c}+\frac{\left[5\left(\mu_{c}+p_{c}\right)\left(\mu_{c}+3p_{c}\right)f_{\mu\mu}-2\left(4\mu_{c}+9p_{c}\right)f_{\mu}\right]f_{p}}{120\left(f_{\mu}\right)^{2}}\\
 & +\frac{\left(\mu_{c}+p_{c}\right)\left(\mu_{c}+3p_{c}\right)f_{pp}}{24f_{p}}-\frac{\left(\mu_{c}+p_{c}\right)\left(\mu_{c}+3p_{c}\right)f_{\mu p}}{12f_{\mu}}\,,
\end{aligned}
\end{equation}
and the derivatives of the function $f$ are evaluated at $\left(\mu_{c},p_{c}\right)$.
We remark that in the light of Theorem~\ref{Theorem:Even_power_series},
the Pad\'e approximants for the energy density and the pressure are
even functions. In particular, for each of these functions, the Pad\'e
approximant of order $\left[2n,2n\right]$ is equal to the Pad\'e
approximant of order $\left[2n+1,2n+1\right]$, for $n\geq0$, if
they exist.

In Section~\ref{subsec:Examples_given_mu}, we have introduced the
idea of finding an analytic approximation for the radius of the star,
$r_{b}$, defined by Eq.~(\ref{Reformulation_eq:radius_star_definition}),
by considering the Pad\'e approximant of order $\left[2,2\right]$
of the pressure. Evaluating the roots of $p_{\left[2,2\right]}$,
Eq.~(\ref{General_Power_Series_eq:Pade_p_general}), yields the following
approximation in terms of the EoS,
\begin{equation}
r_{b}\approx\sqrt{\frac{60p_{c}}{5\mu_{c}\left(\mu_{c}+4p_{c}\right)+\frac{p_{c}\left(4\mu_{c}+9p_{c}\right)f_{p}}{f_{\mu}}}}\,.\label{General_Power_Series_eq:Star_radius_general_approximation}
\end{equation}
Following the same reasoning, we can use Eq.~(\ref{General_Power_Series_eq:General_series})
in Eq.~(\ref{Reformulation_eq:Mass_function_definition}) and compute
the Pad\'e approximant of order {[}3,4{]} to find an approximation
for the value of the mass function at a given circumferential radius.
We find
\begin{equation}
M\approx\frac{\mu_{c}r^{3}}{6-\frac{3\left(\mu_{c}+p_{c}\right)\left(\mu_{c}+3p_{c}\right)f_{p}}{10\mu_{c}f_{\mu}}r^{2}+\frac{\left(\mu_{c}+p_{c}\right)\left(\mu_{c}+3p_{c}\right)\mathsf{M}}{2800\mu^{2}_{c}\left(f_{\mu}\right)^{3}}r^{4}}\,,\label{General_Power_Series_eq:Mass_function_general_approximation}
\end{equation}
where
\begin{equation}
\begin{aligned}\mathsf{M} & =\left[2\left(\mu^{2}_{c}+63p^{2}_{c}+39\mu_{c}p_{c}\right)f_{\mu}+25\mu_{c}\left(\mu_{c}+p_{c}\right)\left(\mu_{c}+3p_{c}\right)f_{\mu\mu}\right]\left(f_{p}\right)^{2}\\
 & -50\mu_{c}f_{\mu}\left[\left(\mu_{c}+p_{c}\right)\left(\mu_{c}+3p_{c}\right)f_{\mu p}-3p_{c}f_{\mu}\right]f_{p}\\
 & +25\mu_{c}\left(\mu_{c}+p_{c}\right)\left(\mu_{c}+3p_{c}\right)f_{pp}\left(f_{\mu}\right)^{2}\,.
\end{aligned}
\end{equation}
Equations.~(\ref{General_Power_Series_eq:Star_radius_general_approximation})
and (\ref{General_Power_Series_eq:Mass_function_general_approximation})
can be used to estimate the compactness parameter: $M\left(r_{b}\right)/r_{b}$,
from the values of $\mu_{c}$, $p_{c}$ and the derivatives up to
second order of the function $f$.

To close this section, we note that it is straightforward to find
symbolic expressions for analytic approximations for $r_{b}$ and
$M$, using higher order approximants. However, the expressions for
the coefficients of the polynomials become increasingly larger, so
that it is not sensible to present general formulas for higher order
analytic approximations. In the next section, we will consider the
higher order approximations for particular EoS.

\section{Applications to selected equations of state\label{Section:Applications_and_examples}}

\subsection{Affine equation of state}

Affine EoS are simple, important models used to characterize states
of matter in the interior of stars~\citep{Witten_1984,Christodoulou_1995,Urbano_Veermae_2019,Sharma_Maharaj_2007}.
Generically, affine EoS can be written as
\begin{equation}
p-\alpha\mu-\beta=0\,,
\end{equation}
where $\alpha$ and $\beta$ are constants. For an affine EoS, the
function $f$ is characterized by $f_{\mu}=-\alpha$ and $f_{p}=1$,
and all higher order derivatives are identically zero.

To exemplify the general analytic results derived in Section~\ref{Section:General_Solution_EOS},
we will consider the particular realization of an affine EoS provided
by the MIT bag model. The MIT bag model follows from the assumption
that hadrons are composed of non-interacting, massless quarks, behaving
as a Fermi gas with energy density $\mu$ and pressure $p$, confined
in a region by an external, counterbalancing pressure, $B$, that
maintains the quark gas at finite density and chemical potential.
In this model, the fluid's EoS is
\begin{equation}
\mu=3p+4B\,.\label{Examples_EoS_eq:MIT_EoS}
\end{equation}

An application of the MIT EoS to compact stellar objects is to characterize
the matter fluid source of Strange Stars. In Ref.~\citep{Farhi_Jaffe_1984},
it was shown that three-flavor, $\left(u,d,s\right)$, quark matter
is stable for values of $B$ between $57\,\text{MeV/fm}^{3}$ and
$94\,\text{MeV/fm}^{3}$, allowing for the possibility that stellar
objects composed of strange quark matter might exist in the universe.
Such objects are often referred as Strange stars. We will then analyze
the properties of these objects by numerically integrating the TOV
equations for a fluid characterized by the MIT EoS~(\ref{Examples_EoS_eq:MIT_EoS}),
and compare the results with those of the analytic approximations
introduced in the previous section. To allow easy comparison with
the results in the literature on Strange Stars, we will present the
results in terms of the bag constant, $B$, in $\text{MeV/fm}^{3}$.

Figure~\ref{Figure:MIT_model_radius_strange_stars} shows the behavior
of the radius of Strange stars as a function of the bag constant,
following from numerical integration of the TOV equations together
with the EoS~(\ref{Examples_EoS_eq:MIT_EoS}), and the analytic approximation~(\ref{General_Power_Series_eq:Star_radius_general_approximation}),
and the respective percent error. Strange stars solutions aim to represent
highly compact configurations, where the matter fields are subjected
to extreme forces. As expected, in these scenarios, the analytic approximation
to the radius fails to fully capture the complexities of the behavior
of the fluid's pressure: the approximations becomes increasingly worse,
as the value of the bag constant decreases.

The radius estimate is important, since it used in the analytic estimation
of the total mass of the star, $M\left(r_{b}\right)$, using Eq.~(\ref{General_Power_Series_eq:Mass_function_general_approximation}).
Given the simplicity of the EoS~(\ref{Examples_EoS_eq:MIT_EoS}),
in this case we can write the expression for the analytic approximation
for $r_{b}$ using the diagonal Pad\'e approximant of order $\left[4,4\right]$
of the pressure in a somewhat compact form. We find
\begin{equation}
r_{b}\approx\sqrt{\frac{\mathcal{R}_{1}}{2\mathcal{R}_{2}}+\varepsilon\frac{\sqrt{\mathcal{R}^{2}_{1}+4\mathcal{R}_{2}\,p_{c}}}{2\mathcal{R}_{2}}}\,,\label{Examples_EoS_eq:Star_radius_second order_approximation}
\end{equation}
where $\varepsilon=\pm1$ and
\begin{equation}
\begin{aligned}\mathcal{R}_{1} & =\frac{-134\mu^{4}_{c}+9513p^{4}_{c}+6336\mu_{c}p^{3}_{c}-1321\mu^{2}_{c}p^{2}_{c}-1078\mu^{3}_{c}p_{c}}{24\left(67\mu^{2}_{c}+567p^{2}_{c}+459\mu_{c}p_{c}\right)}\,,\\
\mathcal{R}_{2} & =\frac{1904\mu^{5}_{c}-1\,154\,979p^{5}_{c}-1\,246\,518\mu_{c}p^{4}_{c}-266\,706\mu^{2}_{c}p^{3}_{c}+34\,330\mu^{3}_{c}p^{2}_{c}+15401\mu^{4}_{c}p_{c}}{10\,080\left(67\mu^{2}_{c}+567p^{2}_{c}+459\mu_{c}p_{c}\right)}\,.
\end{aligned}
\end{equation}
The sign of $\varepsilon$ is chosen so that the left-hand side of
Eq.~(\ref{Examples_EoS_eq:Star_radius_second order_approximation})
is real positive. If the approximation for $r_{b}$ is real positive
for both signs of $\varepsilon$, the one yielding the smallest value
is chosen.

As illustrated in Figure~\ref{Figure:MIT_model_radius_strange_stars},
the increased accuracy of the higher order analytic approximation~(\ref{Examples_EoS_eq:Star_radius_second order_approximation})
in comparison with that of Eq.~(\ref{General_Power_Series_eq:Star_radius_general_approximation})
is significant for all values of the bag constant. The fact that the
analytic approximation~(\ref{Examples_EoS_eq:Star_radius_second order_approximation})
is accurate for such extreme configurations is a consequence of the
simplicity of the EoS.

\begin{figure}
\includegraphics[totalheight=0.19\paperheight]{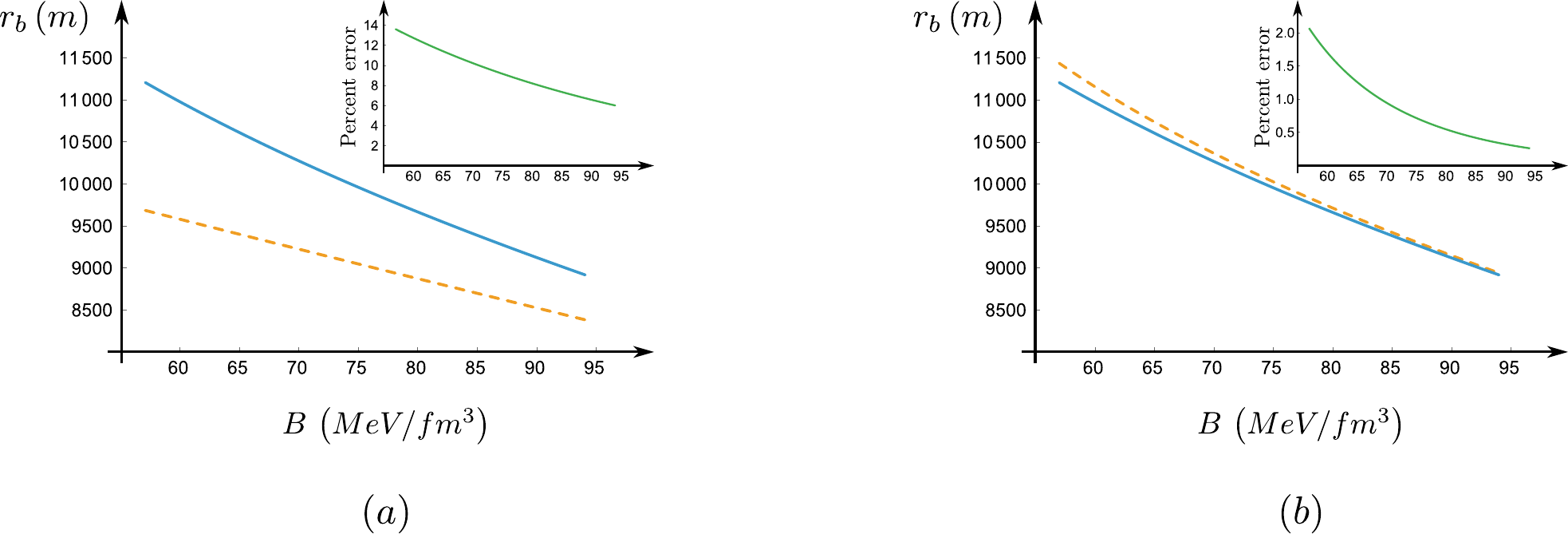}

\caption{\label{Figure:MIT_model_radius_strange_stars}Radius of Strange stars
as a function of the bag constant, assuming central energy density
$\mu_{c}=900\,\text{MeV/fm}^{3}$ In both plots the solid blue line
follows from numerical integration of the TOV equations. In subfigure
($a$), the dashed orange line follows from the analytic approximation~(\ref{General_Power_Series_eq:Star_radius_general_approximation}).
In subfigure ($b$), the dashed orange line follows from the analytic
approximation~(\ref{Examples_EoS_eq:Star_radius_second order_approximation}).
The inset panels in each subfigure show the percent error of the respective
analytic approximations relative to the numerical result.}
\end{figure}

In Figure~\ref{Figure:MIT_model_Compactness_strange_stars}, we present
the compactness parameter, $M/r_{b}$, of strange stars as a function
of the bag constant following from numerical integrating the TOV equations,
and the analytic approximation following from Eqs.~(\ref{General_Power_Series_eq:Mass_function_general_approximation})
and (\ref{Examples_EoS_eq:Star_radius_second order_approximation}).
Expectedly, the errors associated with using the analytic expressions
become worse for higher values of the compactness parameter, where
the behavior of the energy density and the pressure changes significantly,
particularly, closer to the boundary of the star, and higher order
approximations are required to better fit the model. Nonetheless,
the percent error of the analytic approximation relative to the numerical
result is surprisingly low, ranging from 1.4\% to 4.1\%.

\begin{figure}
\includegraphics[totalheight=0.16\paperheight]{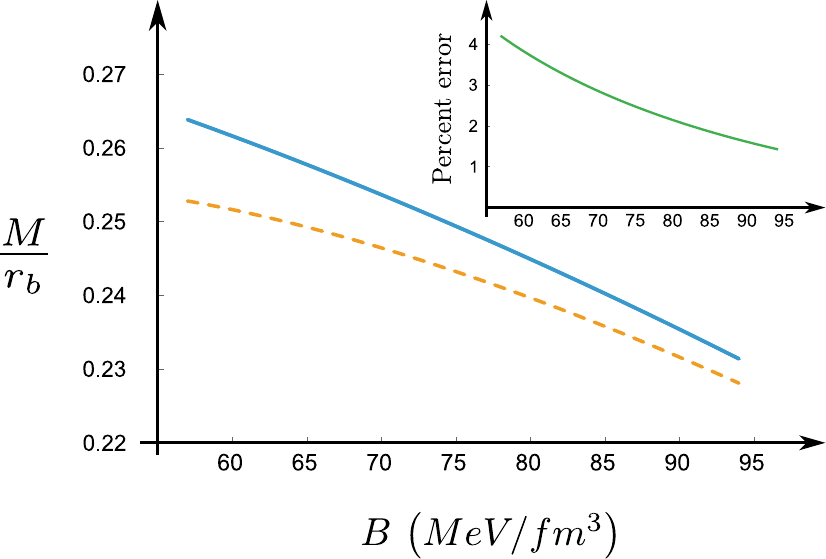}\caption{\label{Figure:MIT_model_Compactness_strange_stars}Compactness parameter
of Strange stars as a function of the bag constant, assuming central
energy density $\mu_{c}=900\,\text{MeV/fm}^{3}$. Solid line follows
from numerical integration of the TOV equations. Dashed curve follows
from using the analytic approximations~(\ref{General_Power_Series_eq:Mass_function_general_approximation})
and (\ref{Examples_EoS_eq:Star_radius_second order_approximation}).
The inset panel shows the percent error of the analytic approximation
relative to the numerical result.}
\end{figure}

\subsection{Relativistic polytropic equation of state}

Relativistic polytropic EoS are a popular family of equations of state
to model matter in the interior of compact stellar objects. In these
models,
\begin{equation}
p=K\,\rho^{\gamma}\,,
\end{equation}
where $\rho$ represents the matter rest mass density, $\gamma$ is
the adiabatic index defined in Eq.~(\ref{General_Power_Series_eq:Adiabatic_index}),
and the polytropic constant $K$ is determined by the thermodynamic
properties of the perfect fluid at a particular point. For instance,
considering the values of the rest mass density and the pressure at
the center of the star, $\rho_{c}$ and $p_{c}$, respectively, we
have
\begin{equation}
K=\frac{p_{c}}{\rho^{\gamma}_{c}}\,.
\end{equation}
For a relativistic polytropic EoS, Eq.~(\ref{General_Power_Series_eq:EoS_general})
reads
\begin{equation}
\left(\frac{p}{K}\right)^{\frac{1}{\gamma}}+\frac{p}{\gamma-1}-\mu=0\,.\label{Examples_EoS_eq:Relativistic_polytropic_EoS}
\end{equation}

The value of the adiabatic index, $\gamma$, and the polytropic constant,
$K$, markedly change the properties of the fluid, such that relativistic
polytropic EoS have been considered to model distinct types of compact
stellar objects. Indeed, fluids characterized by relativistic polytropic
EoS have been extensively considered as effective models, instead
of microphysics based models of the equation of state (see, e.g.,
Refs.\citep{Tooper_1965,Nilsson_Uggla_2001,Heinzle_Rohr_Uggla_2003,Herrera_Barreto_2013,Herrera_et_al_2014}
and references therein).

In general, the thermodynamic properties of fluids described by a
relativistic polytropic EoS generaly have more complex behavior than
those of fluids characterized by an affine equation of state. Depending
on the values of the parameters, the analytic approximations discussed
previously might give very poor results or fail completely. For instance,
if the pressure distribution is characterized by a long decreasing
tail, the analytic approximation for the radius of the star~(\ref{Examples_EoS_eq:Star_radius_second order_approximation})
might yield imaginary results. Nonetheless, we have found that the
Pad\'e approximants converge fast to the solutions, yielding accurate
results by including a few more terms of the series.

To exemplify the accuracy of analytic series solutions, in Table~\ref{Table:Polytropic_comparison}
we present the values of the radius and the compactness parameter
of a self-gravitating fluid characterized by a relativistic polytropic
EoS for different values of the adiabatic index, for fixed values
of the central density and pressure. For comparison, we show the results
for the pressure and the mass function from integrating the TOV equations
numerically and the analytic results using the diagonal Pad\'e approximant
of the order $\left[10,10\right]$.

Depending on the values of $\gamma$ and $K$, numerical integration
of the TOV equations can be challenging. In that regard, we have found
that for this type of EoS, it is preferable to use the formulation
of the TOV equations introduced by Lindblom~\citep{Lindblom_1998}
to solve the equations numerically.

As we see, for the considered values, the analytic results agree with
the numerical approximations, with small percent errors. For smaller
values of $\gamma$, the pressure profile has longer decreasing tails
toward the boundary, such that the percent error increases with decreasing
values of $\gamma$.

\begin{table}
\begin{tabular}{|c|c|c|c||c|c|c|}
\hline 
 & {\small ~~$r_{b}\,\left(\text{Num.}\right)$~~} & {\small ~~$r_{b}\,\left(\text{An. approx.}\right)$~~} & ~~~~~~$\delta$~~~~~~ & {\small ~~$\frac{M}{r_{b}}\,\left(\text{Num.}\right)$~~} & {\small ~~$\frac{M}{r_{b}}\,\left(\text{An. approx.}\right)$~~} & ~~~~~~$\delta$~~~~~~\tabularnewline
\hline 
\hline 
{\small ~~$\gamma=2.0$~~} & {\small 20~624} & {\small 19~584} & {\small 5.0\%} & {\small 0.190} & {\small 0.198} & {\small 4.4\%}\tabularnewline
\hline 
{\small$\gamma=2.2$} & {\small 18~874} & {\small 18~411} & {\small 2.5\%} & {\small 0.200} & {\small 0.204} & {\small 2.3\%}\tabularnewline
\hline 
{\small$\gamma=2.4$} & {\small 17~748} & {\small 17~445} & {\small 1.7\%} & {\small 0.206} & {\small 0.210} & {\small 1.6\%}\tabularnewline
\hline 
{\small$\gamma=2.6$} & {\small 16~957} & {\small 16~729} & {\small 1.3\%} & {\small 0.211} & {\small 0.213} & {\small 1.3\%}\tabularnewline
\hline 
{\small$\gamma=2.8$} & {\small 16~370} & {\small 16~171} & {\small 1.2\%} & {\small 0.214} & {\small 0.216} & {\small 1.1\%}\tabularnewline
\hline 
{\small$\gamma=3.0$} & {\small 15~915} & {\small 15~742} & {\small 1.1\%} & {\small 0.217} & {\small 0.219} & {\small 1.0\%}\tabularnewline
\hline 
\end{tabular}

\caption{\label{Table:Polytropic_comparison} Approximate values of the radius,
in meters, and the compactness parameter found from numerical integration
of the TOV equations with a relativistic EoS~(\ref{Examples_EoS_eq:Relativistic_polytropic_EoS}),
and computed considering the analytic Pad\'e approximant of order
$\left[10,10\right]$, for different values of the adiabatic index
$\gamma$, with central density $\rho_{c}=5\times10^{14}\,\text{g/cm}^{3}$
and central pressure $p_{c}=10^{34}\,\text{Pa}$. The fourth and last
columns list the approximate values of the percent error, $\delta$,
of the analytic approximation relative to the numerical approximation
for the radius and the compactness, respectively.}
\end{table}

\section{Solutions for piecewise equations of state\label{Section:Piecewise}}

\subsection{Series solutions about arbitrary points\label{subsection:Piecewise_general_series}}

Properties of the matter fields may vary significantly as we move
from the inner stellar core toward the outer crust regions. Determining
from microphysical models an EoS that characterizes fluid sources
across all regimes within compact object interiors is extremely challenging.
Moreover, stellar objects may be structured in stratified layers,
where matter exists in distinct states resulting from phase transitions,
such that thermodynamic variables may be discontinuous across transition
layers. It is therefore common to adopt piecewise EoS models, in which
each branch describes the matter fluid in a particular regime~\citep{Lindblom_2010,Read_et_al_2009,Boyle_2020,Suleiman_et_al_2022,Lindblom_1992,Vuille_2008}.

In this section, we derive solutions to the TOV-EoS system, for piecewise
equations of state. 

In Eq.~(\ref{Reformulation_eq:System_A_equation}), we have separated
the singular part from the non-singular part to apply the Coddington-Levison
algorithm to find series solutions about $r=0$. Assuming $\Theta$
is regular, the system does not have other singular points. Then,
if we are interested in determining series solutions in a neighborhood
centered at any point other than at $r=0$, we may write the differential
system simply as
\begin{equation}
\frac{d\mathds{W}}{dr}=\mathds{A}\left(r\right)\mathds{W}\,,
\end{equation}
where
\begin{equation}
\mathds{A}\left(t\right)=\left[\begin{array}{cc}
\frac{1}{r}+\frac{4\mu+6\mathcal{E}}{3r\phi^{2}} & \frac{6\mathcal{E}}{r^{2}\phi^{2}}\\
1 & 0
\end{array}\right]\,.
\end{equation}
The general analytic solutions around a given point, say $r=r_{J}>0$,
are formally given by
\begin{equation}
\mathds{W}=\mathds{C}\left(r\right)\left[\begin{aligned}c_{1}\\
c_{2}
\end{aligned}
\right]\,,\label{Piecewise_eq:Solution_s_u}
\end{equation}
where $c_{1}$ and $c_{2}$ are integration constants, and the matrix
$\mathds{C}$ has power series
\begin{equation}
\mathds{C}\left(r\right)=\sum^{+\infty}_{m=0}\mathds{C}_{m}\left(r-r_{J}\right)^{m}\,,
\end{equation}
where the matrix coefficients $\mathds{C}_{m}$ are given by the recurrence
relation
\begin{equation}
\begin{aligned}\mathds{C}_{0} & =\mathds{I}_{2}\,,\\
\mathds{C}_{m} & =\frac{1}{m}\sum^{m-1}_{k=0}\mathds{A}_{m-1-k}\mathds{C}_{k}\,,\text{for }\ensuremath{m\geq}1\,.
\end{aligned}
\label{Piecewise_eq:Solution_C_recurrence_relation_2}
\end{equation}

Equations~(\ref{Piecewise_eq:Solution_s_u})--(\ref{Piecewise_eq:Solution_C_recurrence_relation_2})
define the power series solutions to the TOV equations about a particular
hypersurface. These can be used to construct solutions, such that
the spacetime is characterized by piecewise potentials. Let $r=r_{J}$
define the junction hypersurface within the star. From Eqs.~(\ref{Reformulation_eq:Solution_s_u_A})
and (\ref{Reformulation_eq:Solution_s_u_p}), setting, for instance,
$c_{2}=1$, we find that $c_{1}$ is given in terms of the energy
density and the pressure at the junction, $\mu_{J}$ and $p_{J}$,
respectively, as
\begin{equation}
c_{1}=\frac{r_{J}}{2\left(1-\frac{2M_{J}}{r_{J}}\right)}\left(p_{J}+\frac{2M_{J}}{r^{3}_{J}}\right)\,,\label{Piecewise_eq:Integration_constant_c1}
\end{equation}
where
\begin{equation}
M_{J}=\frac{1}{2}\int^{r_{J}}_{0}\mu\left(x\right)\,x^{2}\,dx\,,\label{Piecewise_eq:Enclosed_mass_transition}
\end{equation}
 represents the value of the mass at the junction hypersurface.

Similarly to the algorithm introduced in Section~\ref{Section:General_Solution_EOS},
using Eqs.~(\ref{Reformulation_eq:Solution_s_u_A}), (\ref{Reformulation_eq:Solution_s_u_p})
and (\ref{General_Power_Series_eq:EoS_derivative_mu}), and Eqs.~(\ref{Piecewise_eq:Solution_s_u})--(\ref{Piecewise_eq:Integration_constant_c1})
we can compute the power series coefficients of $\mu$, $p$ and $\mathcal{A}$
at $r_{J}$, for a given EoS of the form~(\ref{General_Power_Series_eq:EoS_general}).
We present below the first terms of the power series:
\begin{equation}
\begin{aligned}\mu\left(r\right) & =\mu_{J}+\frac{\left(\mu_{J}+p_{J}\right)\left(2M_{J}+p_{J}r^{3}_{J}\right)f_{p}}{2r_{J}\left(r_{J}-2M_{J}\right)f_{\mu}}\left(r-r_{J}\right)\\
 & -\frac{\left(r_{J}-2M_{J}\right)\left(\mu_{J}+p_{J}\right)\mathfrak{m}_{J}}{8r^{2}_{J}\left[\left(2M_{J}-r_{J}\right)f_{\mu}\right]^{3}}\left(r-r_{J}\right)^{2}+\mathcal{O}\left(\left(r-r_{J}\right)^{3}\right)\,,\\
p\left(r\right) & =p_{J}-\frac{\left(\mu_{J}+p_{J}\right)\left(2M_{J}+p_{J}r^{3}_{J}\right)}{2\left[r_{J}\left(r_{J}-2M_{J}\right)\right]}\left(r-r_{J}\right)\\
 & -\frac{\left(\mu_{J}+p_{J}\right)\mathfrak{p}_{J}}{8r^{2}_{J}\left(r_{J}-2M_{J}\right)^{2}f_{\mu}}\left(r-r_{J}\right)^{2}+\mathcal{O}\left(\left(r-r_{J}\right)^{3}\right)\,,\\
\mathcal{A}\left(r\right) & =\frac{2M_{J}+p_{J}r^{3}_{J}}{2r^{2}_{J}\sqrt{1-\frac{2M_{J}}{r_{J}}}}+\frac{1}{4r^{4}_{J}\left(1-\frac{2M_{J}}{r_{J}}\right)^{\frac{3}{2}}}\left[12M^{2}_{J}-4M_{J}r_{J}\left(2p_{J}r^{2}_{J}+\mu_{J}r^{2}_{J}+2\right)\right.\\
 & \left.+r^{4}_{J}\left(2\mu_{J}-p^{2}_{J}r^{2}_{J}+2p_{J}\right)\right]\left(r-r_{J}\right)+\mathcal{O}\left(\left(r-r_{J}\right)^{2}\right)\,,
\end{aligned}
\label{Piecewise_eq:Series_solutions_mu_p_A_general}
\end{equation}
where the derivatives of $f$ are taken at $\left(\mu_{J},p_{J}\right)$
and
\begin{equation}
\begin{aligned}\mathfrak{m}_{J} & =f_{p}\left(f_{\mu}\right)^{2}\left[4M^{2}_{J}-2M_{J}r_{J}\left(7p_{J}r^{2}_{J}+\mu_{J}r^{2}_{J}+4\right)+r^{4}_{J}\left(2\mu_{J}-2p^{2}_{J}r^{2}_{J}+\mu_{J}p_{J}r^{2}_{J}+2p_{J}\right)\right]\\
 & -\left(\mu_{J}+p_{J}\right)\left(2M_{J}+p_{J}r^{3}_{J}\right)^{2}f_{\mu}\left(f_{pp}f_{\mu}-f_{p}f_{\mu p}\right)+\left(2M_{J}+p_{J}r^{3}_{J}\right)^{2}f_{\mu}\left(f_{p}\right)^{2}\\
 & -\left(\mu_{J}+p_{J}\right)\left(2M_{J}+p_{J}r^{3}_{J}\right)^{2}f_{p}\left(f_{p}f_{\mu\mu}-f_{\mu}f_{\mu p}\right)\,,\\
\mathfrak{p}_{J} & =f_{\mu}\left(4M^{2}_{J}-2M_{J}r_{J}\left(7p_{J}r^{2}_{J}+\mu_{J}r^{2}_{J}+4\right)+r^{4}_{J}\left(2\mu_{J}-2p^{2}_{J}r^{2}_{J}+\mu_{J}p_{J}r^{2}_{J}+2p_{J}\right)\right)\\
 & +\left(2M_{J}+p_{J}r^{3}_{J}\right)^{2}f_{p}\,.
\end{aligned}
\label{Piecewise_eq:Series_solutions_auxiliar_quantities}
\end{equation}
Note that, since $\mathcal{A}\left(r_{J}\right)$ is not necessarily
zero, the power series solutions for $\mu$ and $p$ may contain terms
with odd-powers of $\left(r-r_{J}\right)$. Furthermore, these power
series can be used to construct other series solutions about $r_{J}$,
such as Pad\'e approximants.

Equations~(\ref{Piecewise_eq:Series_solutions_mu_p_A_general}) and
(\ref{Piecewise_eq:Series_solutions_auxiliar_quantities}) depend
on derivatives of the equation of state with respect to the energy
density and the pressure at the junction points: $\left(\mu_{J},p_{J}\right)=\left(\mu\left(r_{J}\right),p\left(r_{J}\right)\right)$.
For piecewise EoS, some derivatives might not exist. Nonetheless,
the series solutions from Eqs.~(\ref{Piecewise_eq:Solution_s_u})--(\ref{Piecewise_eq:Integration_constant_c1})
aim to describe a region where $r>r_{J}$. Therefore, only the one-sided
derivatives of $f$, in the direction $r\to r^{+}_{J}$, are required.
As such, the EoS does not have to be smooth, or even continuous across
the matching hypersurfaces.

To close this subsection, we remark that the Israel-Darmois junction
conditions must be verified at the matching hypersurfaces for the
full interior spacetime to be a solution of the Einstein field equations~\citep{Darmois_1927,Israel_1966}.
Whether the junction is smooth or requires the presence of thin matter
shells at the matching hypersurfaces, determines the values of $p_{J}$
and $M_{J}$. Nonetheless, matching of the series solutions is well
defined. 

\subsection{Example: Piecewise polytrope}

To exemplify the analytic results of the previous subsection and see
how they compare with those found from numerical integration, we will
consider a simple toy model of an EoS defined as a piecewise polytrope
characterizing three regions within a stellar object. The three subregions
are determined by the density of the fluid:
\begin{equation}
p\left(\rho\right)=\begin{cases}
K_{1}\,\rho^{\gamma_{1}} & \text{, }\rho_{c}\leq\rho<\rho_{1}\\
K_{2}\,\rho^{\gamma_{2}} & \text{, }\rho_{1}\leq\rho<\rho_{2}\\
K_{3}\,\rho^{\gamma_{3}} & \text{, }\rho\geq\rho_{2}
\end{cases}
\end{equation}
where $\rho$ represents the matter rest mass density and $\gamma_{i}$,
with $i\in\left\{ 1,2,3\right\} $, are the adiabatic indexes for
each region. We assume the values of the adiabatic index for each
subdomain are known values. To determine the values of the polytropic
constants, we consider that the pressure is continuous throughout
the star, but not differentiable. Imposing these conditions, we find~
\begin{equation}
p\left(\rho\right)=\begin{cases}
{\displaystyle \frac{p_{c}}{\rho^{\gamma_{1}}_{c}}\,\rho^{\gamma_{1}}} & \text{, }\rho_{c}\leq\rho<\rho_{1}\\
{\displaystyle \frac{\rho^{\gamma_{1}-\gamma_{2}}_{1}p_{c}}{\rho^{\gamma_{1}}_{c}}\,\rho^{\gamma_{2}}} & \text{, }\rho_{1}\leq\rho<\rho_{2}\\
{\displaystyle \frac{\rho^{\gamma_{1}-\gamma_{2}}_{1}\rho^{\gamma_{2}-\gamma_{3}}_{2}p_{c}}{\rho^{\gamma_{1}}_{c}}\,\rho^{\gamma_{3}}} & \text{, }\rho\geq\rho_{2}
\end{cases}\label{Piecewise_eq:Piecewise_politropic_EoS}
\end{equation}

In Table~\ref{Table:Piecewise_Polytropic_comparison}, we present
a comparison between values found from numerically integrating the
TOV equations and from the analytic series solutions, for particular
values of the model parameters. Namely, we show approximate values
of the total masses within the transition hypersurfaces and the respective
radii, and the total mass and radius of the star. 

The analytic series solution is defined piecewise. For the inner core
region, where $\rho<\rho_{1}$, we considered the Pad\'e approximant
about $r=0$ of order $\left[10,10\right]$, following from Eqs.~(\ref{Reformulation_eq:Solution_s_u}),
both for the pressure and the energy density. For the regions where
$\rho_{1}\leq\rho\leq\rho_{2}$ and $\rho\geq\rho_{2}$, we considered
the Pad\'e approximants of order $\left[3,3\right]$, respectively,
about the transition radii $\left(r_{J}\right)_{1}$ and $\left(r_{J}\right)_{2}$,
following from Eqs.~(\ref{Piecewise_eq:Solution_s_u})--(\ref{Piecewise_eq:Enclosed_mass_transition}).

Comparing the results, we see that the numerical integration and the
analytic approximations are in close agreement. We remark that, while
analytic expressions can be evaluated rapidly, numerical integration
can be particularly slow for certain values of the adiabatic index.

\begin{table}
\begin{tabular}{|c|c|c|c|c|c|c|}
\hline 
 & {\small ~~$\left(r_{J}\right)_{1}$~~} & {\small ~~$\frac{\left(M_{J}\right)_{1}}{M_{\odot}}$~~} & {\small ~~$\left(r_{J}\right)_{2}$} ~~ & {\small ~~$\frac{\left(M_{J}\right)_{2}}{M_{\odot}}$~~} & {\small ~~~~$r_{b}$~~~~} & {\small ~~~$\frac{M}{M_{\odot}}$~~~}\tabularnewline
\hline 
\hline 
{\small ~~Numeric~~} & {\small 13~929} & {\small 2.043} & {\small 16~657} & {\small 2.341} & {\small 16~701} & {\small 2.343}\tabularnewline
\hline 
{\small ~~Analytic~~} & {\small 13~929} & {\small 1.974} & {\small 16~681} & {\small 2.344} & {\small 16~682} & {\small 2.344}\tabularnewline
\hline 
\end{tabular}

\caption{\label{Table:Piecewise_Polytropic_comparison} Approximate values
of the radii, in meters, the masses within the transition hypersurfaces,
and the total mass and radius of a star characterized by a piecewise
EoS~(\ref{Piecewise_eq:Piecewise_politropic_EoS}), with central
density $\rho_{c}=5\times10^{14}\,\text{g/cm}^{3}$, central pressure
$p_{c}=10^{34}\,\text{Pa}$, transition rest mass densities: $\rho_{1}=2\times10^{14}\,\text{g/cm}^{3}$
and $\rho_{2}=10^{12}\,\text{g/cm}^{3}$, and the polytropic indexes
$\left(\gamma_{1},\gamma_{2},\gamma_{3}\right)=\left(3,2,\frac{4}{3}\right)$.
The second row shows the values obtained from numerical integration.
The third row shows the values obtained from the analytic series solutions.}
\end{table}

\section{Conclusion\label{Section:Conclusions}}

Assuming sufficient regularity of the matter fluid, we have shown
that the TOV equations can be solved in terms of analytic series.
The series solutions developed in this work allowed us to obtain closed-form
approximations to macroscopic properties of compact stellar objects,
either in terms of the spacetime parameters or in terms of the underlying
equation of state of the fluid.

The considered examples demonstrate that analytic solutions exist
for widely used, nontrivial EoS, and general approximation formulas
for the stellar radius and mass can be derived. The accuracy of the
analytic approximations depends on the behavior of the thermodynamic
variables. In cases where the energy density or the pressure are characterized
by long decreasing tails, more terms of the series are required. Nevertheless,
using Pad\'e approximants, the series converge sufficiently fast,
such that closed-form expressions based on only a few terms of the
series yielded accurate results in the cases examined.

Realistic stellar objects are composed of matter in distinct physical
regimes. To model the stratification of stellar interiors, piecewise
equations of state are often adopted. In such cases, the thermodynamic
potentials may fail to be differentiable, or even continuous, across
junction hypersurfaces. To deal with such configurations, we have
constructed general series expansions about arbitrary points, such
that analytic solutions can be obtained for piecewise EoS and different
regions are characterized by different series, matched across the
transition hypersurfaces. To test this approach, we have considered
a piecewise relativistic polytropic EoS as a toy model. The analytic
results closely matched the numerical integration, having the advantage
of being much faster to evaluate.

For piecewise solutions, we have not derived approximate formulas
for the radius and mass functions. For piecewise EoS, analytic approximations
depend explicitly on the number of transition hypersurfaces. Nevertheless,
for a fixed number of transitions, analytic approximation formulas
can be constructed following the same reasoning as in the single-expression
EoS case.
\begin{acknowledgments}
The author thanks the Funda\c{c}\~{a}o para a Ci\^encia e Tecnologia
(FCT), Portugal, for the financial support to the Center for Astrophysics
and Gravitation through grant No. UID/PRR/00099/2025 and grant No.
UID/00099/2025.
\end{acknowledgments}

\appendix

\section{\label{Section:Appendix_Proof-of-Theorem}Proof of the parity property
of power series solutions}

Here we will prove Theorem~\ref{Theorem:Even_power_series}, stating
that if solutions to the TOV equations coupled with an equation of
state exist, such that $\mu$ is real analytic at $r=0$, its power
series contains only even powers of the circumferential radius.

We will prove a number of intermediate results which we summarize
in a set of lemmas. To avoid unnecessary repetition, the matrix $\widetilde{\Theta}$
is given by Eq.~(\ref{Reformulation_eq:Theta_tilde}) and has power
series expansion $\widetilde{\Theta}\left(r\right)=\sum^{+\infty}_{m=0}\widetilde{\Theta}_{m}\,r^{m}$;
it is assumed a barotropic EoS of the form~(\ref{General_Power_Series_eq:EoS_general}),
analytic at the center of the star and such that $f_{\mu}\left(\mu_{c},p_{c}\right)\neq0$,
where $\mu_{c}\equiv\mu\left(0\right)$ and $p_{c}\equiv p\left(0\right)$
represent the values of the central energy density and central pressure,
respectively, and $f_{\mu}\equiv\partial_{\mu}f$ and $f_{p}\equiv\partial_{p}f$.
\begin{lem}
\label{Lemma:Vanishin_odd_derivative_EoS}Let $f\left(\mu,p\right)=0$
represent the equation of state of the fluid source, defining $p$
implicitly as a function of $\mu$. For $n\geq1$ odd, if $\frac{f_{p}}{f_{\mu}}\in\mathcal{C}^{n}\left(\mu\left(0\right)\right)$
and $\mu\in\mathcal{C}^{n}\left(0\right)$, and for all odd $k\in\left\{ 1,...,n\right\} $,
$\mu^{\left(k\right)}$$\left(0\right)$ vanish, then
\begin{equation}
\frac{d^{n}}{dr^{n}}\left(\frac{f_{p}}{f_{\mu}}\right)\left(0\right)=0\,.\label{Appendix_eq:Faa_di_Bruno_ratio_EoS_at_zero}
\end{equation}
\end{lem}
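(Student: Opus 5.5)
The function $g\equiv f_{p}/f_{\mu}$ is evaluated along the solution, so along the solution we study the composition $h(r)=g\left(\mu(r),p(\mu(r))\right)$. Since $f=0$ defines $p$ implicitly as a function of $\mu$ (using $f_{p}\neq0$ where needed, or more generally regarding $f_p/f_\mu$ as a function of $\mu$ alone via the EoS), we can write $h(r)=G\left(\mu(r)\right)$ with $G$ a single-variable function of class $\mathcal{C}^{n}$ near $\mu(0)$, as the hypothesis $\frac{f_{p}}{f_{\mu}}\in\mathcal{C}^{n}\left(\mu\left(0\right)\right)$ states. The plan is to expand $h^{(n)}(0)$ with Fa\`a di Bruno's formula and show that every term vanishes.

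Fa\`a di Bruno gives
\begin{equation*}
\frac{d^{n}}{dr^{n}}G\left(\mu(r)\right)=\sum\frac{n!}{m_{1}!\,m_{2}!\cdots m_{n}!}\,G^{(m_{1}+\cdots+m_{n})}\left(\mu(r)\right)\prod_{j=1}^{n}\left(\frac{\mu^{(j)}(r)}{j!}\right)^{m_{j}}\,,
\end{equation*}
where the sum runs over nonnegative integers with $\sum_{j}j\,m_{j}=n$. We evaluate this at $r=0$. The key parity observation is that, because $n$ is odd and $\sum_j j\,m_j=n$, at least one index $j$ with $j$ odd must have $m_{j}\geq1$. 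Indeed, if all $m_{j}$ with odd $j$ were zero, then $\sum_j j\,m_j$ would be a sum of even numbers and hence even, which is a contradiction. Every such odd $j$ satisfies $j\leq n$, so the hypothesis gives $\mu^{(j)}(0)=0$. The corresponding product in that term then contains a vanishing factor, so the term is zero. Each $G^{(k)}(\mu(0))$ with $k\leq n$ is finite by the $\mathcal{C}^{n}$ assumption, so no term has the form $0\cdot\infty$. Summing over all terms gives Eq.~(\ref{Appendix_eq:Faa_di_Bruno_ratio_EoS_at_zero}).

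The main obstacle is making precise the reduction of $f_{p}/f_{\mu}$ to a function of $\mu$ alone with the required regularity. I would handle it as follows. Use the implicit function theorem applied to $f(\mu,p)=0$ to write $p=P(\mu)$ locally; this requires $f_{p}\neq0$. Then define $G(\mu)=\frac{f_{p}}{f_{\mu}}\left(\mu,P(\mu)\right)$, which is how the lemma's phrase ``defining $p$ implicitly as a function of $\mu$'' should be read. The differentiability class then follows by hypothesis rather than from analyticity. Once this reduction is in place, the rest is the purely combinatorial parity count above.
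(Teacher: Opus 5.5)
Your proposal is correct and follows the paper's proof: both treat $f_p/f_\mu$ as a function of $\mu$ alone, expand the $n$th derivative of the composition with Fa\`a di Bruno's formula, and use the fact that for odd $n$ every partition $\sum_j j\,m_j=n$ has some odd $j\le n$ with $m_j\ge1$, so every term contains a vanishing factor $\mu^{(j)}(0)$. Your remark that the implicit-function reduction needs $f_p\neq0$ (the paper's standing assumption is $f_\mu\neq0$) goes slightly beyond the paper. It does not affect the argument, because the lemma itself takes ``$p$ defined implicitly as a function of $\mu$'' and $f_p/f_\mu\in\mathcal{C}^n$ as hypotheses.
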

\begin{proof}
A barotropic EoS $f\left(\mu,p\right)=0$ defines $p=p\left(\mu\right)$
implicitly as a function of $\mu$, such that $\frac{f_{p}}{f_{\mu}}$
is effectively a function of $\mu$. In turn $\mu$ is a function
of the radial coordinate $r$. If $\frac{f_{p}}{f_{\mu}}\in\mathcal{C}^{n}\left(\mu\left(0\right)\right)$
and $\mu\in\mathcal{C}^{n}\left(0\right)$, we can use directly Faà
di Bruno's formula for the $n^{th}$ derivative of the composition,
yielding
\begin{equation}
\frac{d^{n}}{dr^{n}}\left(\frac{f_{p}}{f_{\mu}}\right)\left(0\right)=\sum\frac{n!}{m_{1}!m_{2}!\,\cdots\,m_{n}!}\left(\frac{f_{p}}{f_{\mu}}\right)^{\left(m_{1}+m_{2}+\,\cdots\,+m_{n}\right)}\left(\mu\left(0\right)\right)\times\prod^{n}_{i=1}\left(\frac{1}{i!}\mu^{\left(i\right)}\left(0\right)\right)^{m_{i}}\,,\label{Appendix_eq:Faa_di_Bruno_ratio_EoS}
\end{equation}
where the sum is taken over all non-negative integers $m_{1},m_{2},...,m_{n}$
that verify the constraint $m_{1}+2m_{2}+3m_{3}+...+n\,m_{n}=n$,
and the derivatives of $\frac{f_{p}}{f_{\mu}}$ in the sum are with
respect to $\mu$.

For $n$ odd, that is, for an odd-order derivative, $n$ cannot be
the sum of even numbers. Therefore, there is at least one non-trivial
odd-indexed $m_{i}$ in the equation $m_{1}+2m_{2}+3m_{3}+...+n\,m_{n}=n$.
Then, for $n$ odd, in Eq.~(\ref{Appendix_eq:Faa_di_Bruno_ratio_EoS}),
we will always have in the product at least one term with an odd-order
derivative of $\mu$. Assuming that for all odd $k\leq n$, $\mu^{\left(k\right)}$$\left(0\right)$
vanish, Eq.~(\ref{Appendix_eq:Faa_di_Bruno_ratio_EoS}) implies~(\ref{Appendix_eq:Faa_di_Bruno_ratio_EoS_at_zero}).
\end{proof}

\begin{lem}
\label{Lemma:Vanishing_odd_derivative_p}Let $f\left(\mu,p\right)=0$
represent the equation of state of the fluid source, defining $p$
implicitly as a function of $\mu$. For $n\geq1$ odd, if $p\in\mathcal{C}^{n}\left(\mu\left(0\right)\right)$,
$\mu\in\mathcal{C}^{n}\left(0\right)$ and for all odd $k\in\left\{ 1,...,n\right\} $,
$\mu^{\left(k\right)}$$\left(0\right)$ vanish, then
\begin{equation}
\frac{d^{n}p}{dr^{n}}\left(0\right)=0\,.
\end{equation}
\end{lem}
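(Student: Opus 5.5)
The plan is to repeat the argument of Lemma~\ref{Lemma:Vanishin_odd_derivative_EoS}, with the ratio $f_{p}/f_{\mu}$ replaced by the pressure itself. Since $f_{\mu}\left(\mu_{c},p_{c}\right)\neq0$ is not what lets us solve for $p$, we simply use the hypothesis of the lemma: the EoS defines $p=p\left(\mu\right)$ implicitly, and $p\in\mathcal{C}^{n}\left(\mu\left(0\right)\right)$. Along the solution, $p\left(r\right)=p\left(\mu\left(r\right)\right)$ is therefore a composition of a $\mathcal{C}^{n}$ function of $\mu$ with the $\mathcal{C}^{n}$ function $\mu\left(r\right)$. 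Faà di Bruno's formula then applies at $r=0$:
\begin{equation*}
\frac{d^{n}p}{dr^{n}}\left(0\right)=\sum\frac{n!}{m_{1}!\,\cdots\,m_{n}!}\,p^{\left(m_{1}+\cdots+m_{n}\right)}\left(\mu\left(0\right)\right)\prod^{n}_{i=1}\left(\frac{1}{i!}\mu^{\left(i\right)}\left(0\right)\right)^{m_{i}}\,.
\end{equation*}
Here the sum runs over non-negative integers with $\sum_{i}i\,m_{i}=n$, and the derivatives of $p$ are taken with respect to $\mu$.

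The key step is the same parity observation as before. Because $n$ is odd, the constraint $\sum_{i}i\,m_{i}=n$ cannot be satisfied using only even indices $i$, since the sum would then be even. So every admissible tuple has some odd $i\leq n$ with $m_{i}\geq1$. The corresponding product therefore contains the factor $\left(\mu^{\left(i\right)}\left(0\right)/i!\right)^{m_{i}}$ with $m_{i}\geq1$, and by hypothesis $\mu^{\left(i\right)}\left(0\right)=0$. Every term of the finite sum vanishes, and so $d^{n}p/dr^{n}\left(0\right)=0$.

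The only point needing care is the legitimacy of treating $p$ as a $\mathcal{C}^{n}$ function of $\mu$ near $\mu_{c}$. This is an assumption of the lemma rather than something to be derived. In the setting of Theorem~\ref{Theorem:Even_power_series} one would note that it follows from the implicit function theorem whenever $f_{p}\neq0$. If instead one wished to avoid that, one could differentiate the identity $f\left(\mu\left(r\right),p\left(r\right)\right)=0$ directly, but that route is unnecessary given the stated hypothesis. No other obstacle is expected, because the argument is purely combinatorial once the composition structure is in place.
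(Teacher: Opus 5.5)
Your proposal is correct and takes the paper's route: the paper proves this lemma by saying it repeats the argument of Lemma~\ref{Lemma:Vanishin_odd_derivative_EoS}. That argument is exactly the Fa\`a di Bruno expansion of the composition $p(\mu(r))$ at $r=0$, together with the observation that odd $n$ forces some odd index $i$ with $m_i\geq1$ in every term. Your side remark that solving for $p$ via the implicit function theorem needs $f_p\neq0$, not $f_\mu\neq0$, is accurate. It does not affect the argument, since $p\in\mathcal{C}^{n}$ near $\mu(0)$ is assumed in the lemma.
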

The proof of Lemma~\ref{Lemma:Vanishing_odd_derivative_p} follows
the same reasoning of the proof of Lemma~\ref{Lemma:Vanishin_odd_derivative_EoS}.
\begin{lem}
\label{Lemma:Even_theta_zero_vanishin_odd_P}Let $n\geq0$ and the
matrix $\mathds{P}$ verify Eqs.~(\ref{Reformulation_eq:Power_series_P_matrix})
and (\ref{Reformulation_eq:Recurrence_relation_P}). If $\widetilde{\Theta}_{2k}=0$
for all\linebreak $k\in\left\{ 0,1,...,n\right\} $, then $\mathds{P}_{2n+1}=0$.
\end{lem}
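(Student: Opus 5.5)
Since the statement involves only the recurrence (\ref{Reformulation_eq:Recurrence_relation_P}), the natural approach is a strong induction on $n$, proving the slightly stronger claim that $\mathds{P}_{2j+1}=0$ for every $j\in\{0,1,\dots,n\}$. This is legitimate because the hypothesis $\widetilde{\Theta}_{2k}=0$ for $k\leq n$ contains the corresponding hypothesis for every smaller $n'\leq n$. The key observation is a parity count on the indices appearing in the recurrence. For $m=2j+1$ we have
\begin{equation*}
\mathds{P}_{2j+1}=\frac{1}{2j+1}\sum^{2j}_{k=0}\widetilde{\Theta}_{2j-k}\,\mathds{P}_{k}\,,
\end{equation*}
and the index of $\widetilde{\Theta}$ plus the index of $\mathds{P}$ always equals $2j$, which is even. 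Hence each term pairs either an even-indexed $\widetilde{\Theta}$ with an even-indexed $\mathds{P}$, or an odd-indexed $\widetilde{\Theta}$ with an odd-indexed $\mathds{P}$.

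\emph{Base case ($j=0$).} Here $\mathds{P}_{1}=\widetilde{\Theta}_{0}\mathds{P}_{0}=\widetilde{\Theta}_{0}$, which vanishes by hypothesis with $k=0$.

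\emph{Inductive step.} Assume $\mathds{P}_{2i+1}=0$ for all $i<j$, where $j\leq n$. Split the sum above into even $k=2s$ and odd $k=2s+1$:
\begin{itemize}
\item For even $k=2s$ with $0\leq s\leq j$, the factor is $\widetilde{\Theta}_{2(j-s)}$. Its index satisfies $0\leq j-s\leq n$, so it vanishes by hypothesis.
\item For odd $k=2s+1$ with $0\leq s\leq j-1$, the factor $\mathds{P}_{2s+1}$ has $s<j$, so it vanishes by the induction hypothesis.
\end{itemize}
Every term is therefore zero, and $\mathds{P}_{2j+1}=0$. Taking $j=n$ gives the lemma.

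I do not expect a real obstacle. The only points needing care are the index bookkeeping and the need for the strengthened, all-$j$ induction statement. Without it, the odd-$k$ terms, which involve odd $\mathds{P}$'s of lower order, could not be controlled. The proof uses no properties of $\mu$; it is purely a statement about the linear recurrence.
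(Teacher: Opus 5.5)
Your proof is correct and follows essentially the same route as the paper. The paper also uses the recurrence for $\mathds{P}_{2n+1}$, drops the even-indexed $\widetilde{\Theta}$ terms by hypothesis, and then chains $\mathds{P}_{1}=0\Rightarrow\mathds{P}_{3}=0\Rightarrow\cdots$; your strong induction with explicit index bookkeeping is a cleaner write-up of that chain.
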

\begin{proof}
The proof follows directly from the recurrence relation~(\ref{Reformulation_eq:Recurrence_relation_P}).
For $m=2n+1$, where $n\geq0$, we have
\begin{equation}
\mathds{P}_{2n+1}=\frac{1}{2n+1}\sum^{2n}_{i=0}\widetilde{\Theta}_{2n-i}\,\mathds{P}_{i}\,.\label{Appendix_eq:Sum_P_n_odd}
\end{equation}
If all $\widetilde{\Theta}_{2k}=0$ vanish for $0\leq k\leq n$, the
sum in Eq.~(\ref{Appendix_eq:Sum_P_n_odd}) contains only terms with
odd index $i$, that is, $\mathds{P}_{2n+1}$ depends only on the
coefficients $\mathds{P}_{i}$ with index $i$ odd smaller than $2n$.
If $\widetilde{\Theta}_{0}=0$, $\mathds{P}_{1}=0$. In turn, $\mathds{P}_{1}=0$
implies $\mathds{P}_{3}=0$, which implies $\mathds{P}_{5}=0$ and
so on.
\end{proof}

Lemma~\ref{Lemma:Even_theta_zero_vanishin_odd_P} follows from the
assumption that for a given value $2n+1$, all terms of the power
series of $\widetilde{\Theta}$ of even-order smaller than or equal
to $2n$ vanish. The following result provides sufficient conditions
for this to be verified.
\begin{lem}
\label{Lemma:mu_Theta_relation}Let $n\geq0$ and $\mu$ be real analytic
at $r=0$, such that $\mu^{\left(k\right)}$$\left(0\right)$ vanish,
for all odd $k\in\left\{ 1,...,2n+1\right\} $. Then, $\widetilde{\Theta}_{2n}=0$.
\end{lem}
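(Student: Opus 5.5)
The plan is to compute $\widetilde{\Theta}_{2n}$ entry by entry from Eq.~(\ref{Reformulation_eq:Theta_tilde}), keeping track of which Taylor coefficients vanish. I will call a function analytic at $r=0$ \emph{even up to order $N$} if its Taylor coefficients of odd order $\leq N$ vanish. I will call it \emph{odd up to order $N$} if its coefficients of even order $\leq N$ vanish. Write $\mu=\sum_k\mu_k r^k$; by hypothesis $\mu$ is even up to order $2n+1$.

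The first step is to express the mass function through $\mu$ directly. From Eq.~(\ref{Reformulation_eq:Mass_function_definition}) with $\Lambda=0$, the substitution $x=rt$ gives
\begin{equation*}
\frac{M}{r^{3}}=\frac{1}{2}\int^{1}_{0}\mu\left(rt\right)t^{2}\,dt=\sum_{k\geq0}\frac{\mu_{k}}{2\left(k+3\right)}\,r^{k}\,.
\end{equation*}
Hence $M/r^{3}$ is even up to order $2n+1$. Consequently $2M/r=2r^{2}\left(M/r^{3}\right)$ is even up to order $2n+3$ and vanishes at $r=0$.

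The second step handles the common factor $1/\bigl(2\left(1-2M/r\right)\bigr)$. Since $2M/r$ vanishes at $0$ and is even up to order $2n+3$, composing it with the analytic function $x\mapsto1/(1-x)$ preserves this property. I will justify this with the Fa\`a di Bruno argument of Lemma~\ref{Lemma:Vanishin_odd_derivative_EoS}: every odd-order derivative of the composition contains at least one odd-order derivative of the inner function. The same argument, or a direct check of the Cauchy product, also gives two parity rules for products. An even-type factor times an even-type factor is even-type, up to the smaller order. An odd-type factor times an even-type factor is odd-type.

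The third step examines the individual entries. The $(2,1)$ entry is $r$, whose coefficient of $r^{2n}$ vanishes because $2n\neq1$; the $(2,2)$ entry is zero. The $(1,1)$ entry is $r$ times the product of $\bigl(\mu-2M/r^{3}\bigr)$ and $1/\bigl(2(1-2M/r)\bigr)$. Both factors are even up to order $2n+1$, so the product is too. Multiplying by $r$ gives a function odd up to order $2n+2$, so its $r^{2n}$ coefficient vanishes.

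The $(1,2)$ entry is the step I expect to need the most care, because of the apparent $1/r$ singularity. The coefficients of $\mu-6M/r^{3}$ are $\mu_{k}\bigl(1-3/(k+3)\bigr)=\mu_{k}\,k/(k+3)$. In particular the constant term cancels, so $\left(\mu-6M/r^{3}\right)/r=\sum_{k\geq1}\mu_{k}\frac{k}{k+3}r^{k-1}$ is analytic. Its coefficient of $r^{j}$ is proportional to $\mu_{j+1}$, which vanishes for even $j\leq2n$, so this function is odd up to order $2n$. Multiplying by the even-type factor $1/\bigl(2(1-2M/r)\bigr)$, the coefficient of $r^{2n}$ is $\sum_{i}a_{i}b_{2n-i}$. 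Here $a_{i}=0$ for even $i$ by the previous sentence, and $b_{2n-i}=0$ for odd $i$ because $2n-i$ is then odd and at most $2n$. Every term vanishes, so this entry's $r^{2n}$ coefficient is zero. Together with the other entries this gives $\widetilde{\Theta}_{2n}=0$.
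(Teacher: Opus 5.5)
Your proof is correct, but its key step differs from the paper's. Both arguments begin with the Taylor coefficients of $r(\mu-2M/r^{3})$, $(\mu-6M/r^{3})/r$ and $2(1-2M/r)$ in terms of $\mu^{(k)}(0)$; your substitution $x=rt$ is a tidy way to obtain them.

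The paper then treats each entry of $\widetilde{\Theta}$ as a quotient of power series. It writes the coefficients $d_m,\bar d_m$ using the determinant formula for series division and argues by repeated Laplace expansion. Because the odd-indexed $b_k$ and the even-indexed $a_k$ vanish, columns with a single nonzero entry $b_0$ can be removed one by one. What remains is an $\frac{m}{2}\times\frac{m}{2}$ matrix whose first column consists only of even-indexed $a_k$, so its determinant is zero.

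You avoid series division altogether. You view $1/\bigl(2(1-2M/r)\bigr)$ as a composition with $x\mapsto 1/(1-x)$ and get its parity from the Fa\`a di Bruno argument of Lemma~\ref{Lemma:Vanishin_odd_derivative_EoS}. After that, everything reduces to parity counts on the indices of Cauchy products.

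The paper's route yields explicit determinant formulas for every coefficient of $\widetilde{\Theta}$. It reuses the same device for $s/u$ in Lemma~\ref{Lemma:A_over_r_phi}.

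Your route has several advantages. It is shorter, and it is complete at exactly the point where the paper's elimination step is only sketched (``following this procedure''). It treats $n=0$ and $n\geq1$ uniformly, whereas the paper handles $d_0,\bar d_0$ separately. Your index counts also show at once which hypothesis each entry uses: only the $(1,2)$ entry needs $\mu^{(2n+1)}(0)=0$. Finally, your ``even/odd up to order $N$'' bookkeeping would give an equally clean proof of the parity claim in Lemma~\ref{Lemma:A_over_r_phi}.
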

\begin{proof}
To prove the lemma, we will explicitly find the coefficients $\widetilde{\Theta}_{m}$
of the power series of $\widetilde{\Theta}$ at $r=0$, in terms of
the derivatives of $\mu$. Since $\left(\widetilde{\Theta}\right)_{21}=r$
and $\left(\widetilde{\Theta}\right)_{22}=0$, their power series
are trivial. Then, we will focus specifically on the power series
of the $\left(1,1\right)$ and $\left(1,2\right)$ entries.

Assuming $\mu$ is real analytic at $r=0$ with power series $\mu\left(r\right)=\sum^{\infty}_{m=0}\frac{1}{m!}\mu^{\left(m\right)}\left(0\right)r^{m}$,
from Eq.~(\ref{Reformulation_eq:Mass_function_definition}) we have
\begin{equation}
2\left(1-\frac{2M}{r}\right)=2-2\sum^{\infty}_{m=0}\frac{\mu^{\left(m\right)}\left(0\right)}{m!\left(m+3\right)}\,r^{m+2}\,.
\end{equation}
Therefore,
\begin{equation}
2\left(1-\frac{2M}{r}\right)=\sum^{\infty}_{m=0}b_{m}\,r^{m}\,,
\end{equation}
where
\begin{equation}
\left\{ \begin{aligned}b_{0} & =2\\
b_{1} & =0\\
b_{m} & =-\frac{2}{\left(m+1\right)\left(m-2\right)!}\,\mu^{\left(m-2\right)}\left(0\right)\;,m\geq2
\end{aligned}
\right.\label{Appendix_eq:b_coeff}
\end{equation}
In particular, for $m\geq2$ even, the coefficients $b_{m}$ depend
only on the even-order derivatives of $\mu$ at zero.

Continuing, 
\begin{equation}
r\left(\mu-\frac{2M}{r^{3}}\right)=\sum^{\infty}_{m=0}\frac{m+2}{\left(m+3\right)m!}\,\mu^{\left(m\right)}\left(0\right)r^{m+1}\,.
\end{equation}
Therefore,
\begin{equation}
r\left(\mu-\frac{2M}{r^{3}}\right)=\sum^{\infty}_{m=0}a_{m}\,r^{m}\,,
\end{equation}
where
\begin{equation}
\left\{ \begin{aligned}a_{0} & =0\\
a_{m} & =\frac{m+1}{\left(m+2\right)\left(m-1\right)!}\,\mu^{\left(m-1\right)}\left(0\right)\;,m\geq1
\end{aligned}
\right.\label{Appendix_eq:a_coeff}
\end{equation}

Lastly,
\[
\frac{1}{r}\left(\mu-\frac{6M}{r^{3}}\right)=\sum^{\infty}_{m=1}\frac{m}{\left(m+3\right)m!}\,\mu^{\left(m\right)}\left(0\right)r^{m-1}\,,
\]
which can be written as
\begin{equation}
\frac{1}{r}\left(\mu-\frac{6M}{r^{3}}\right)=\sum^{\infty}_{m=0}\bar{a}_{m}\,r^{m}\,,
\end{equation}
with
\begin{equation}
\left\{ \begin{aligned}\bar{a}_{0} & =\frac{1}{4}\mu^{\left(1\right)}\left(0\right)\\
\bar{a}_{m} & =\frac{m+1}{\left(m+4\right)\left(m+1\right)!}\,\mu^{\left(m+1\right)}\left(0\right)\;,m\geq1
\end{aligned}
\right.\label{Appendix_eq:a_bar_coeff}
\end{equation}

From Eqs.~(\ref{Appendix_eq:a_coeff}) and (\ref{Appendix_eq:a_bar_coeff}),
for $m$ even, the coefficients $a_{m}$ and $\bar{a}_{m}$ depend
only on odd-order derivatives of $\mu$ at zero.

The division of power series can be expressed compactly using determinants.
Namely, for the $\left(1,1\right)$ entry we have
\begin{equation}
{\displaystyle \left(\widetilde{\Theta}\right)_{11}=\frac{r\left(\mu-\frac{2M}{r^{3}}\right)}{2\left(1-\frac{2M}{r}\right)}=\frac{{\displaystyle \sum^{\infty}_{m=0}a_{m}\,r^{m}}}{{\displaystyle \sum^{\infty}_{m=0}b_{m}\,r^{m}}}=\sum^{\infty}_{m=0}d_{m}\,r^{m}\,,}
\end{equation}
where
\begin{equation}
\left\{ \begin{aligned}d_{0} & =\frac{a_{0}}{b_{0}}\,,\\
d_{m} & =\frac{1}{b^{m+1}_{0}}\left|\begin{array}{ccccc}
a_{m} & b_{1} & b_{2} & \cdots & b_{m}\\
a_{m-1} & b_{0} & b_{1} & \cdots & b_{m-1}\\
a_{m-2} & 0 & b_{0} & \cdots & b_{m-2}\\
\vdots & \vdots & \vdots & \ddots & \vdots\\
a_{0} & 0 & 0 & \cdots & b_{0}
\end{array}\right|\,,\text{for }m\geq1
\end{aligned}
\right.\label{Appendix_eq:d_coeff}
\end{equation}
For $\left(1,2\right)$ entry,
\begin{equation}
{\displaystyle \left(\widetilde{\Theta}\right)_{12}={\displaystyle \frac{\mu-\frac{6M}{r^{3}}}{2r\left(1-\frac{2M}{r}\right)}}=\frac{{\displaystyle \sum^{\infty}_{m=0}\bar{a}_{m}\,r^{m}}}{{\displaystyle \sum^{\infty}_{m=0}b_{m}\,r^{m}}}=\sum^{\infty}_{m=0}\bar{d}_{m}\,r^{m}\,,}
\end{equation}
where
\begin{equation}
\left\{ \begin{aligned}\bar{d}_{0} & =\frac{\bar{a}_{0}}{b_{0}}\\
\bar{d}_{m} & =\frac{1}{b^{m+1}_{0}}\left|\begin{array}{ccccc}
\bar{a}_{m} & b_{1} & b_{2} & \cdots & b_{m}\\
\bar{a}_{m-1} & b_{0} & b_{1} & \cdots & b_{m-1}\\
\bar{a}_{m-2} & 0 & b_{0} & \cdots & b_{m-2}\\
\vdots & \vdots & \vdots & \ddots & \vdots\\
\bar{a}_{0} & 0 & 0 & \cdots & b_{0}
\end{array}\right|\,,\text{for }m\geq1
\end{aligned}
\right.\label{Appendix_eq:d_bar_coeff}
\end{equation}
From Eqs.~(\ref{Appendix_eq:d_coeff}) and (\ref{Appendix_eq:d_bar_coeff}),
$d_{0}=0$, and $\bar{d}_{0}=\frac{1}{8}\mu^{\left(1\right)}\left(0\right)$.
Then, $\widetilde{\Theta}_{0}=0$ , if $\mu^{\left(1\right)}\left(0\right)=0$.
For $m\geq2$ even, we will show that if all derivatives $\mu^{\left(k\right)}\left(0\right)$,
with odd $k\in\left\{ 1,...,m-1\right\} $, vanish, then $\left(\widetilde{\Theta}_{m}\right)_{11}$
is zero. The proof for $\left(\widetilde{\Theta}_{m}\right)_{12}$
follows similarly, assuming that all derivatives $\mu^{\left(k\right)}\left(0\right)$,
with odd $k\in\left\{ 1,...,m+1\right\} $, vanish.

We will prove by recurrence that for $m\geq2$ even, $d_{m}=0$, using
Laplace cofactor expansion for the determinant. To not have to worry
about the signs that follow from applying Laplace expansion, we will
consider the absolute value, $\left|d_{m}\right|$.

For an even $m\geq2$, $\left(\widetilde{\Theta}_{m}\right)_{11}$
is given by
\begin{equation}
d_{m}=\frac{1}{2^{m+1}}\left|\begin{array}{ccccc}
a_{m} & b_{1} & b_{2} & \cdots & b_{m}\\
a_{m-1} & b_{0} & b_{1} & \cdots & b_{m-1}\\
a_{m-2} & 0 & b_{0} & \cdots & b_{m-2}\\
\vdots & \vdots & \vdots & \ddots & \vdots\\
a_{0} & 0 & 0 & \cdots & b_{0}
\end{array}\right|\,,\label{Appendix_eq:d_matrix_general}
\end{equation}
where we have used $b_{0}=2$. If all derivatives $\mu^{\left(k\right)}\left(0\right)$,
with odd $k\leq m-1$, vanish, all $b_{k}$, and $a_{k+1}$, are zero.
Given the structure of the matrix in Eq.~(\ref{Appendix_eq:d_matrix_general}),
using Laplace expansion, the determinant is associated with the determinant
of the submatrix with the last line and last column removed from the
original matrix, such that
\begin{equation}
\left|d_{m}\right|=\left|\frac{1}{2^{m+1}}\det\left[\begin{array}{ccccc}
a_{m} & b_{1} & b_{2} & \cdots & b_{m}\\
a_{m-1} & b_{0} & b_{1} & \cdots & b_{m-1}\\
a_{m-2} & 0 & b_{0} & \cdots & b_{m-2}\\
\vdots & \vdots & \vdots & \ddots & \vdots\\
a_{0} & 0 & 0 & \cdots & b_{0}
\end{array}\right]\right|=\left|\frac{1}{2^{m}}\det\left[\begin{array}{ccccc}
a_{m} & b_{1} & b_{2} & \cdots & b_{m-1}\\
a_{m-1} & b_{0} & b_{1} & \cdots & b_{m-2}\\
a_{m-2} & 0 & b_{0} & \cdots & b_{m-3}\\
\vdots & \vdots & \vdots & \ddots & \vdots\\
a_{1} & 0 & 0 & \cdots & b_{0}
\end{array}\right]\right|\,.
\end{equation}
In turn, since $b_{1}=0$, the second column only has one non-trivial
entry, hence
\begin{equation}
\left|d_{m}\right|=\left|\frac{1}{2^{m-1}}\det\left[\begin{array}{ccccc}
a_{m} & b_{2} & b_{3} & \cdots & b_{m-1}\\
a_{m-2} & b_{0} & b_{1} & \cdots & b_{m-3}\\
a_{m-3} & 0 & b_{0} & \cdots & b_{m-4}\\
\vdots & \vdots & \vdots & \ddots & \vdots\\
a_{1} & 0 & 0 & \cdots & b_{0}
\end{array}\right]\right|\,.
\end{equation}
Since $b_{1}=b_{3}=0$, again the second column only has one non-trivial
entry, therefore
\begin{equation}
\left|d_{m}\right|=\left|\frac{1}{2^{m-2}}\det\left[\begin{array}{ccccc}
a_{m} & b_{2} & b_{4} & \cdots & b_{m-1}\\
a_{m-2} & b_{0} & b_{2} & \cdots & b_{m-3}\\
a_{m-4} & 0 & b_{0} & \cdots & b_{m-5}\\
\vdots & \vdots & \vdots & \ddots & \vdots\\
a_{1} & 0 & 0 & \cdots & b_{0}
\end{array}\right]\right|\,.
\end{equation}
Following this procedure, since all even indexed coefficients $a_{i}$
vanish, we will end up with a matrix of size $\frac{m}{2}\times\frac{m}{2}$
whose first column is composed of only zeros, hence the determinant
is zero. Curiously, the procedure starts and ends by considering a
submatrix with the last line and last column removed from a previous
matrix.

We have shown that, for $m$ even, $d_{m}=0$. The proof that $\bar{d}_{m}=0$,
for $m$ even, follows similarly. Then, for $m$ even, $\widetilde{\Theta}_{m}=0$.
\end{proof}

The last intermediate result that we will need concerns the structure
of the power series of $\frac{\mathcal{A}}{r\phi}$.
\begin{lem}
\label{Lemma:A_over_r_phi}Let $s$ and $u$ be a solution of Eqs.~(\ref{Reformulation_eq:Solution_s_u})--(\ref{Reformulation_eq:Theta_tilde}),
then
\begin{equation}
\left(\frac{\mathcal{A}}{r\phi}\right)\left(0\right)=0\,.
\end{equation}
 If $\mu$ is a real analytic function at $r=0$, such that for an
even $m\geq2$, $\mu^{\left(k\right)}$$\left(0\right)$ vanish, for
all odd $k\in\left\{ 1,...,m-1\right\} $, then,
\begin{equation}
\left(\frac{\mathcal{A}}{r\phi}\right)^{\left(k+1\right)}\left(0\right)=0\,.
\end{equation}
\end{lem}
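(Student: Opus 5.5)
The plan is to work entirely from the representation of $\mathcal{A}$ through the solution of the linear system. By Eq.~(\ref{Reformulation_eq:Solution_s_u_A}),
\begin{equation}
\frac{\mathcal{A}}{r\phi}=\frac{s}{2u}=\frac{r\left[\left(\mathds{P}\right)_{11}c_{1}+\left(\mathds{P}\right)_{12}c_{2}\right]}{2\left[\left(\mathds{P}\right)_{21}c_{1}+\left(\mathds{P}\right)_{22}c_{2}\right]}\,,
\end{equation}
where $\mathds{P}$ is the analytic matrix of Eq.~(\ref{Reformulation_eq:Power_series_P_matrix}) with $\mathds{P}_{0}=\mathds{I}_{2}$.

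\textbf{First claim.} The denominator equals $2c_{2}\neq0$ at $r=0$, and the numerator carries an explicit factor $r$. Hence $\mathcal{A}/(r\phi)$ is analytic at the origin and vanishes there.

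\textbf{Parity up to order $m$.} Fix an even $m\geq2$ and assume $\mu^{(k)}(0)=0$ for all odd $k\leq m-1$. I first apply Lemma~\ref{Lemma:mu_Theta_relation} with index $n$ for every $n$ such that $2n+1\leq m-1$, that is $n\in\{0,\dots,\frac{m}{2}-1\}$. This gives $\widetilde{\Theta}_{2n}=0$ for these $n$. Lemma~\ref{Lemma:Even_theta_zero_vanishin_odd_P}, applied with $n=\frac{m}{2}-1$, then yields $\mathds{P}_{1}=\mathds{P}_{3}=\dots=\mathds{P}_{m-1}=0$. Consequently every entry of $\mathds{P}$ has the form $E(r)+\mathcal{O}(r^{m})$, where $E$ is an even polynomial of degree at most $m-2$.

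\textbf{Conclusion.} Write the bracket in the numerator as $N(r)=N_{e}(r)+\mathcal{O}(r^{m})$ and the bracket in the denominator as $D(r)=D_{e}(r)+\mathcal{O}(r^{m})$, with $N_{e},D_{e}$ even and $D_{e}(0)=c_{2}\neq0$. Then $N/D=N_{e}/D_{e}+\mathcal{O}(r^{m})$, since the reciprocal of $D$ is analytic near $0$. The quotient $N_{e}/D_{e}$ is an even analytic function, so $N/D$ agrees with an even function up to order $r^{m-1}$. Multiplying by $r/2$, the function $\mathcal{A}/(r\phi)$ agrees with an odd function up to $\mathcal{O}(r^{m+1})$. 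Therefore all Taylor coefficients of even order $j\leq m$ vanish, i.e. $(\mathcal{A}/(r\phi))^{(k+1)}(0)=0$ for every odd $k\in\{1,\dots,m-1\}$, which is the statement.

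\textbf{Main obstacle.} The delicate step is the index bookkeeping between the three lemmas. One must check that the hypothesis ``odd derivatives up to $m-1$'' exactly suffices to kill $\widetilde{\Theta}_{0},\dots,\widetilde{\Theta}_{m-2}$, since Lemma~\ref{Lemma:mu_Theta_relation} needs odd derivatives up to $2n+1$. This in turn kills $\mathds{P}_{j}$ for odd $j\leq m-1$, and one gains one more order from the prefactor $r$. I would also make explicit that truncated division of power series preserves parity up to the truncation order. This follows from the recursive formula for the reciprocal $1/D$, or equivalently from the determinant formula (\ref{Appendix_eq:d_coeff}) used in Lemma~\ref{Lemma:mu_Theta_relation}.
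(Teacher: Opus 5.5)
Your proposal is correct and follows essentially the paper's route: write $2\mathcal{A}/(r\phi)=s/u$, use Lemmas~\ref{Lemma:mu_Theta_relation} and~\ref{Lemma:Even_theta_zero_vanishin_odd_P} to kill the odd-order coefficients $\mathds{P}_{j}$ with $j\le m-1$, and conclude that the quotient is odd up to order $m$. The differences are only in presentation. You get the oddness of $s$ from the explicit factor $r$ in Eq.~(\ref{Reformulation_eq:Solution_s_u}) rather than from $s=du/dr$, and you make the truncated-division step precise with $\mathcal{O}(r^{m})$ remainders, which avoids the paper's somewhat loose indexing of the coefficients $s_{i}$.
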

\begin{proof}
Equation~(\ref{Reformulation_eq:Solution_s_u_A}) implies $\frac{2\mathcal{A}}{r\phi}=\frac{s}{u}$.
Then, it suffices to show that truncating the power series of $u$
at some even order, it contains only terms with even powers of $r$.

From Eqs.~(\ref{Reformulation_eq:Solution_s_u}) and (\ref{Reformulation_eq:Power_series_P_matrix})
we have
\begin{equation}
\left[\begin{array}{r}
s\\
u
\end{array}\right]=\left[\begin{array}{lr}
r & 0\\
0 & 1
\end{array}\right]\left(\mathds{P}_{0}+\mathds{P}_{1}\,r+\mathds{P}_{2}\,r^{2}+...\right)\left[\begin{aligned}c_{1}\\
c_{2}
\end{aligned}
\right]\,,
\end{equation}
hence
\begin{equation}
\begin{aligned}s & =\sum^{\infty}_{i=0}\left[c_{1}\left(\mathds{P}_{i}\right)_{11}+c_{2}\left(\mathds{P}_{i}\right)_{12}\right]r^{i+1}=\sum^{\infty}_{i=0}s_{i}\,r^{i+1}\,,\\
u & =\sum^{\infty}_{i=0}\left[c_{1}\left(\mathds{P}_{i}\right)_{21}+c_{2}\left(\mathds{P}_{i}\right)_{22}\right]r^{i}=\sum^{\infty}_{i=0}u_{i}\,r^{i}\,.
\end{aligned}
\end{equation}
In particular, $s_{0}=0$ and, for $i\geq1$, $s_{i}$ depends on
$\mathds{P}_{i-1}$, and $u_{i}$ depends on $\mathds{P}_{i}$. The
quotient $s/u$ can be written as a power series in terms of the power
series of $s$ and $u$, such that
\begin{equation}
\frac{s}{u}={\displaystyle \sum^{\infty}_{i=0}d_{i}\,r^{i}\,,}
\end{equation}
where
\begin{equation}
\left\{ \begin{aligned}d_{0} & =0\,,\\
d_{m} & =\frac{1}{u^{m+1}_{0}}\left|\begin{array}{ccccc}
s_{m} & u_{1} & u_{2} & \cdots & u_{m}\\
s_{m-1} & u_{0} & u_{1} & \cdots & u_{m-1}\\
s_{m-2} & 0 & u_{0} & \cdots & u_{m-2}\\
\vdots & \vdots & \vdots & \ddots & \vdots\\
s_{0} & 0 & 0 & \cdots & u_{0}
\end{array}\right|\,,\text{for }m\geq1
\end{aligned}
\right.
\end{equation}
In particular, since the zeroth order coefficient $d_{0}$ is identically
zero, it implies that $\left(\frac{2\mathcal{A}}{r\phi}\right)\left(0\right)=0$.
Moreover, for a given $m\geq1$, the coefficient $d_{m}$ depends
on the matrix coefficients $\mathds{P}_{i}$, with $i=\left\{ 0,1,...,m\right\} $.

For an even $m\geq2$, if the derivatives $\mu^{\left(k\right)}$$\left(0\right)$
vanish for all odd $k\in\left\{ 1,...,m-1\right\} $, Lemmas~\ref{Lemma:Even_theta_zero_vanishin_odd_P}
and \ref{Lemma:mu_Theta_relation} imply that all $\mathds{P}_{k}$
are identically zero. Therefore, all $u_{k}$ vanish and $\sum^{m}_{i=0}u_{i}\,r^{i}$
defines an even function of the radial coordinate. Naturally, this
implies $\sum^{m}_{i=0}s_{i}\,r^{i}$ defines an odd function, since
$s=du/dr$. Therefore,
\begin{equation}
\frac{{\displaystyle \sum^{m}_{i=0}s_{i}\,r^{i}}}{{\displaystyle \sum^{m}_{i=0}u_{i}\,r^{i}}}\,,
\end{equation}
is an odd function in the radial coordinate, in particular, all even-order
terms of its power series expansion about $r=0$ are identically zero.
\end{proof}

Lemmas~\ref{Lemma:Vanishin_odd_derivative_EoS}--\ref{Lemma:A_over_r_phi}
follow from assuming that all odd-order derivatives of $\mu$ of order
smaller than or equal to a particular $n$, vanish at $r=0$. The
last step in the proof of Theorem~\ref{Theorem:Even_power_series}
is to show that analytic solutions of Eqs.~(\ref{Reformulation_eq:Solution_s_u})--(\ref{Reformulation_eq:Theta_tilde})
together with Eq.~(\ref{General_Power_Series_eq:EoS_derivative_mu})
have exactly this property.

Taking the derivative of order $n$ of Eq.~(\ref{General_Power_Series_eq:EoS_derivative_mu})
yields
\begin{equation}
\left(\mu\right)^{\left(n+1\right)}\left(0\right)=2\sum^{n}_{k=0}\sum^{k}_{i=0}\left(\begin{array}{c}
n\\
k
\end{array}\right)\left(\begin{array}{c}
k\\
i
\end{array}\right)\left(\frac{f_{p}}{f_{\mu}}\right)^{\left(n-k\right)}\left(\mu+p\right)^{\left(k-i\right)}\left(\frac{\mathcal{A}}{r\phi}\right)^{\left(i\right)}\,,\label{Appendix_eq:derivative_mu_general}
\end{equation}
where
\begin{equation}
\left(\begin{array}{c}
n\\
k
\end{array}\right)=\frac{n!}{k!\left(n-k\right)!}\,,
\end{equation}
and all derivatives are taken with respect to $r$ and evaluated at
$r=0$.

For $n=0$, Lemma~\ref{Lemma:A_over_r_phi} states that $\left(\frac{\mathcal{A}}{r\phi}\right)\left(0\right)$.
Then, from Eq.~(\ref{Appendix_eq:derivative_mu_general}), $\mu^{\left(1\right)}\left(0\right)=0$.

For $n=2$, since $\mu^{\left(1\right)}\left(0\right)=0$, from Lemma~\ref{Lemma:A_over_r_phi},
the derivatives $\left(\frac{\mathcal{A}}{r\phi}\right)^{\left(i\right)}$,
with $i\in\left\{ 0,2\right\} $, vanish, such that in the sums in~(\ref{Appendix_eq:derivative_mu_general}),
only the terms with $i$ odd, $i=1$, may be nontrivial. However,
if $k$ is even, then $k-i$ is odd. Using Lemma~\ref{Lemma:Vanishing_odd_derivative_p},
the term $\left(\mu+p\right)^{\left(k-i\right)}$ vanishes. If $k$
is odd, $n-k$ is odd. From Lemma~\ref{Lemma:Vanishin_odd_derivative_EoS},
the term $\left(\frac{f_{p}}{f_{\mu}}\right)^{\left(n-k\right)}$
is zero. Therefore, $\mu^{\left(3\right)}\left(0\right)=0$.

The cases for $n=\left\{ 5,7,...\right\} $ follow similarly.

\end{document}